\documentclass[journal]{IEEEtran}%

\usepackage{tex/signalpreamble}
\usepackage{tikz-cd}
\usepackage{pgfplots}
\pgfplotsset{compat=1.5}%%%%%%%%%%%%%%%%%%%

\crefname{question}{question}{questions}

\theoremstyle{definition}

\newacronym[plural=GDSs]{GDS}{GDS}{graph distribution-valued signal}
\newacronym[plural=SAGSs]{SAGS}{SAGS}{signal adaptive graph structure}
\newacronym{GSP}{GSP}{graph signal processing}
\newacronym{GFT}{GFT}{graph Fourier transform}
\newacronym{GSO}{GSO}{graph shift operator}
\newacronym{GDS-FT}{GDS-FT}{GDS Fourier transform}
\newacronym{CDF}{cdf}{cumulative distribution function}
\newacronym{GMM}{GMM}{Gaussian mixture model}

\myexternaldocument{supplementary}

\begin{document}
\date{}
\title{Graph Distribution-valued Signals in Wasserstein Spaces: Theory and Applications}
\author{Yanan Zhao, Feng Ji, Xingchao Jian and Wee Peng Tay,~\IEEEmembership{Senior~Member,~IEEE} %
%\thanks{This research is supported by the Singapore Ministry of Education Academic Research Fund Tier 2 grant MOE2018-T2-2-019 and A*STAR under its RIE2020 Advanced Manufacturing and Engineering (AME) Industry Alignment Fund – Pre Positioning (IAF-PP) (Grant No. A19D6a0053).}% 
\thanks{The authors are with the School of Electrical and Electronic Engineering, Nanyang Technological University, 639798, Singapore. (e-mail: yanan002, jifeng, xingchao001, wptay@ntu.edu.sg).}%
}

\markboth{submitted to IEEE TRANSACTIONS ON SIGNAL PROCESSING}%
{How to Use the IEEEtran \LaTeX \ Template}
%{Tay: xxxx}
\maketitle

\begin{abstract}
We introduce a framework for \gls{GSP} in which signals are represented as \glspl{GDS}, i.e., probability measures in a Wasserstein space. This perspective addresses fundamental limitations of classical vector-based \gls{GSP}, including the requirement for complete synchronous observations across vertices and the need for strict temporal correspondence in observed filter input--output pairs. Furthermore, by modeling the graph structure as a distribution conditioned on signal realizations, we provide a principled approach to signal-dependent graph structures, which are common in real-world applications, while explicitly encoding uncertainty in graph topology. Our framework inherently captures uncertainty and stochasticity while strictly generalizing traditional graph signals, which can be interpreted as Dirac delta measures. We develop a systematic correspondence between foundational GSP concepts and their GDS analogs, showing that classical formulations emerge as special cases of our framework. We establish theoretical continuity results for GDS transforms, providing stability guarantees for input perturbations and distribution approximations.
We demonstrate the utility of this approach through example applications, including graph filter learning and anomaly detection, and validate its effectiveness through empirical studies.
\end{abstract}

\glsresetall

\begin{IEEEkeywords}
Graph distribution-valued signals, graph signal processing, Wasserstein spaces
\end{IEEEkeywords}

\section{Introduction} \label{sec:intro}

Graphs provide a powerful framework for modeling complex systems in diverse domains, including social networks, transportation systems, and sensor networks. In the classical \gls{GSP} framework \cite{San14, Shu13, Ort18, Girault2018, Ji19, JiaJiTay23}, graph signals are represented as vectors, where each entry corresponds to the value at a specific node. This representation enables the application of linear operators such as the \gls{GFT}, convolution, and graph filters \cite{Shu13, Ort18, ortega_2022} to analyze signal structure and inter-node relationships.

Despite its widespread adoption, the vector-based \gls{GSP} framework exhibits several critical limitations:
(i) \emph{Assumption of complete observations;} Classical \gls{GSP} frameworks assume that signals are observed synchronously and completely across all nodes, so that operators such as the \gls{GFT} can be applied to the full signal vector. However, in practice, data collection is often asynchronous or incomplete \cite{Hua20, Ji19, Ji25}, making this assumption unrealistic.
(ii) \emph{Requirement for strict signal correspondence.} While statistical \gls{GSP} methods \cite{PerVan:J17,MarSegLeuRib:J17,JiaTayEld24,KroRouEld:J22,JiaTay22,SagRou23} incorporate randomness via random variables, graph filters are still defined as vector-to-vector mappings and require paired input--output signals, e.g., $\set{(\bx_i, \by_i)}_{i=1}^m$, with the filter $\bF$ trained to satisfy $\bF(\bx_i) \approx \by_i$. This imposes a rigid one-to-one correspondence between input and output signals. In practice, such alignment is often imperfect due to temporal shifts, overlaps, periodic patterns, or missing and shuffled data, thereby limiting the applicability of these methods.
(iii) \emph{Assumption of a deterministic graph topology.} Classical \gls{GSP} represents graph signals as vectors indexed by the vertices of a \emph{fixed} graph, and defines the \gls{GFT} via the eigendecomposition of a predetermined \gls{GSO} (e.g., the adjacency matrix, Laplacian matrix, or their normalized variants). This formulation inherently assumes that the graph topology is fully known and deterministic. In many applications, however, the graph structure may be uncertain or partially observed \cite{stefania2019,dong2016,JiTayOrt23,Ji23e}, posing significant challenges to this classical framework.

To address these challenges, we model both graph signals and graph topologies as probability distributions rather than as random vectors or sample realizations. In contrast to statistical \gls{GSP}, which typically operates either on samples with prior information like stationarity \cite{KroRouEld:J22,SagRou23,PerVan:J17,MarSegLeuRib:J17}, our framework treats the full probability distribution as the fundamental signal object. Accordingly, filters are optimized directly in distribution space, rather than in the node-value vector space. When the signal and topology distributions are unknown, they are first estimated from samples, and these estimates are then used for filter learning. This differs from statistical \gls{GSP}, where filters are learned directly from samples. Most statistical \gls{GSP} methods therefore remain subject to the limitations discussed above, because they continue to rely on vector-based signal representations. In our framework, the classical vector space of graph signals is replaced by the Wasserstein space \cite{Vil09,Kol56,Dur19}, whose elements are probability measures on $\mathbb{R}^{N}$; we call such a signal a \textbf{\gls{GDS}}. As a further generalization, we model the graph structure itself as a distribution conditioned on the signal realization, referred to as \textbf{\gls{SAGS}}.

\begin{figure*}[!t]
\centering

% Top: partial observations
\begin{subfigure}[b]{0.325\textwidth}
    \centering
    % \scalebox{1}[0.9]{
    \includegraphics[width=\textwidth,
        trim={2cm 8.4cm 1.4cm 7cm},
        clip
      ]{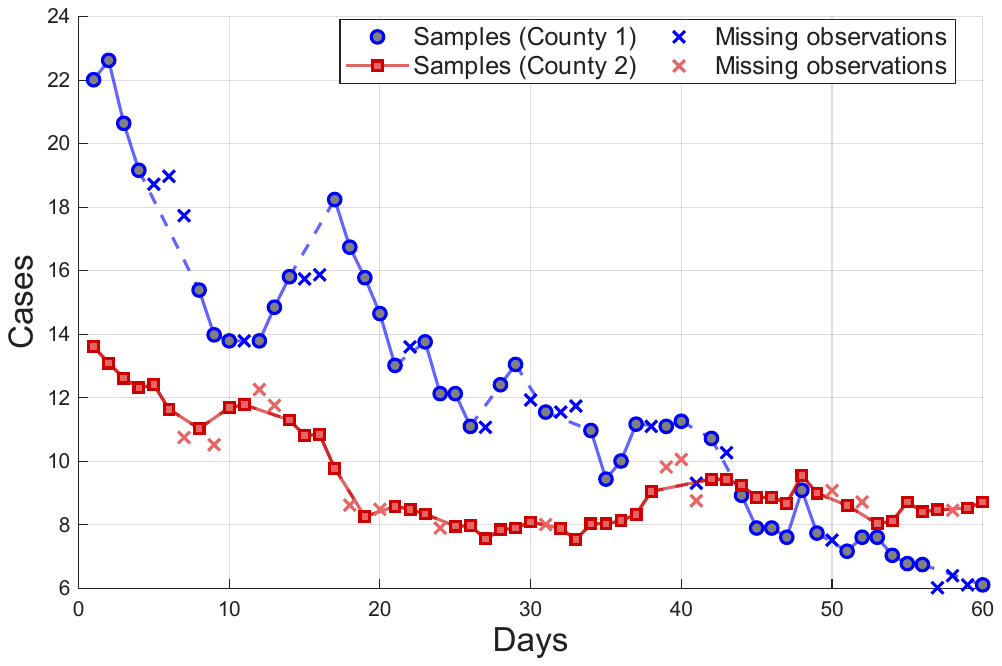}
    % }
    \vspace{-7mm}
    \caption{\scriptsize Daily cases from two counties}
    \label{fig.covid_samples_partial}
\end{subfigure}
% \par\vspace{-2mm}
% \par\vspace{2mm}
% Bottom-left: distribution from complete observations
\begin{subfigure}[b]{0.325\textwidth}
\centering
\includegraphics[width=1\textwidth, trim={1.8cm 8.4cm 1.8cm 7cm}, clip]{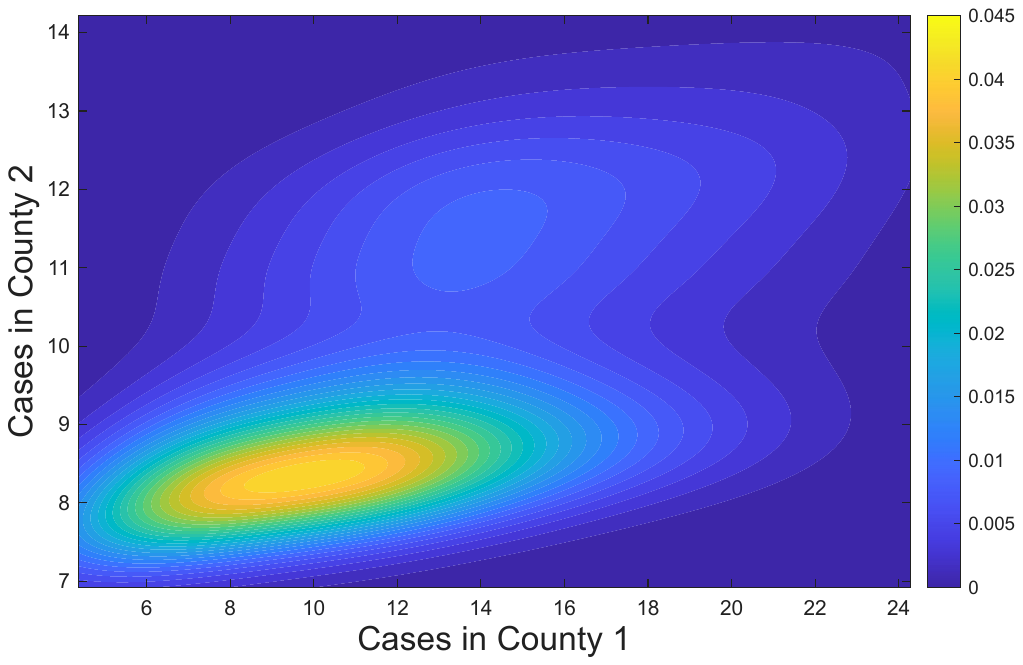}
\vspace{-7mm}
\caption{\scriptsize Joint distribution from complete observations}
\label{fig.covid_distribution_full}
\end{subfigure}
% \hfill
% Bottom-right: distribution from partial observations
\begin{subfigure}[b]{0.325\textwidth}
\centering
\includegraphics[width=1\textwidth, trim={1.8cm 8.4cm 1.8cm 7cm}, clip]{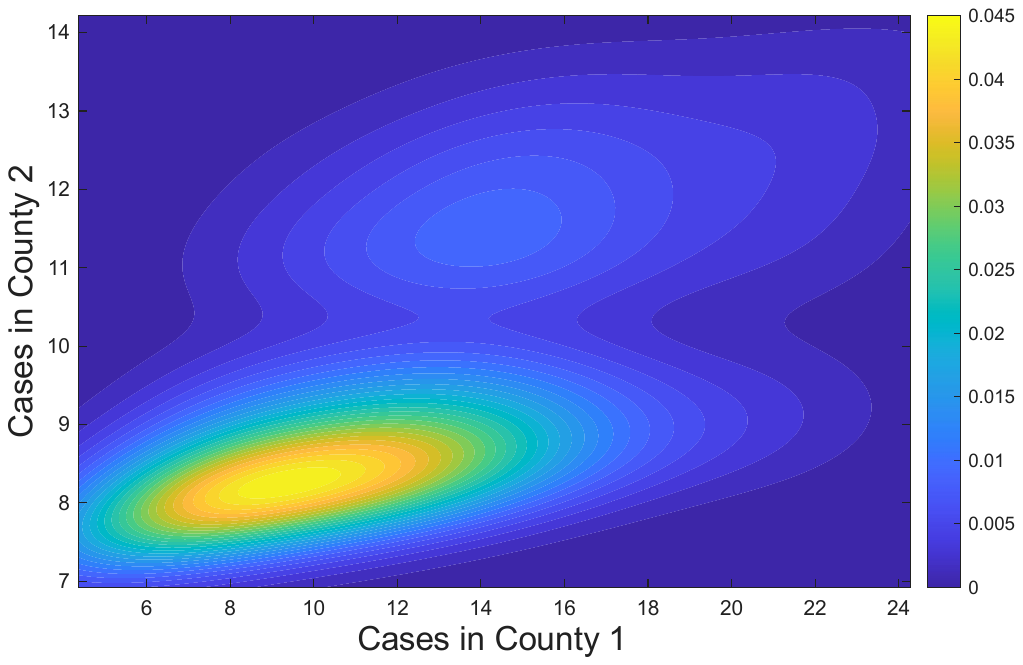}
\vspace{-7mm}
\caption{\scriptsize Joint distribution from partial observations}
\label{fig.covid_distribution_partial}
\end{subfigure}
\vspace{-2mm}
\caption{(a) COVID-19 daily cases from two counties over 60 days, together with the corresponding partial observations generated using an independent per-county missing rate of $20\%$; crosses indicate the masked entries.
(b) Joint distribution estimated from the complete observations using KDE-estimated marginals and a Gaussian copula, representing the fully observed trajectories in (a) as realizations from an underlying distribution within the GDS framework. (c) Joint distribution estimated from the partial observations using the same Gaussian-copula model.}
\label{fig:samples_vs_distributions}
% \vspace{-3mm}
\end{figure*}

To illustrate the distinction from statistical \gls{GSP}, consider COVID-19 daily case counts from 58 counties (2020--2022). In classical statistical \gls{GSP}, each day is represented as a graph signal vector $\bx \in \Real^{58}$, where the $i$-th entry is the reported count in county $i$. This representation is convenient when county-level records are complete. In practice, however, many days include missing county reports. The vector-based approach then requires either discarding incomplete days, which can cause substantial data loss, or imputing missing values, which may introduce bias into subsequent filter learning.
In contrast, the \gls{GDS} framework treats county-level observations over time as samples from an underlying joint \emph{probability distribution}, rather than as isolated daily vectors. Filtering is thus learned in the Wasserstein space of probability measures instead of the vector space of node-valued graph signals. This perspective captures both day-to-day variability and inter-county dependence. 

\Cref{fig:samples_vs_distributions} illustrates this distinction: \cref{fig.covid_samples_partial} shows 60 days of two-county case counts as vector-valued samples, whereas \cref{fig.covid_distribution_full} shows the corresponding joint distribution estimated from the same period using KDE marginals and a Gaussian copula. Under missing observations, this formulation can still leverage partial data through distribution estimation rather than requiring complete daily vectors. For an independent per-county missing rate of $20\%$, the probability that a two-dimensional daily observation is complete is $0.64$; hence only about $38$ of 60 days are directly usable as complete vectors. By contrast, each county retains about $80\%$ of its observations for marginal estimation, while pairwise-complete observations are used to estimate copula dependence. As shown in \cref{fig.covid_distribution_full,fig.covid_distribution_partial}, the Gaussian-copula estimate from partial observations remains close to the complete-observation estimate. Their discrepancy is quantified by
\begin{align*}
d_{\mathrm{TV}}\left(\hat{\mu}_{\mathrm{full}},\hat{\mu}_{\mathrm{partial}}\right)
=\frac{1}{2}\int_{\Real^{2}}\left|\hat{p}_{\mathrm{full}}(\bx)-\hat{p}_{\mathrm{partial}}(\bx)\right|\,\ud\bx,
\end{align*}
where $\hat{p}_{\mathrm{full}}$ and $\hat{p}_{\mathrm{partial}}$ denote densities estimated from complete and partial observations, respectively. The resulting value, $d_{\mathrm{TV}}=0.043$, indicates that the dominant distributional structure is preserved despite missing entries. The effect of missing observations is further examined in \cref{sec:exper_G_learning} and \cref{fig:masking}, where we compare vector-based \gls{GSP} filtering with the proposed \gls{GDS} framework.

This flexibility comes with a trade-off. Classical statistical \gls{GSP} is simpler when complete, reliable vector-valued samples are available. The \gls{GDS} framework instead requires estimation of the underlying joint distribution, including inter-county dependence, so its performance depends on distribution-model quality. Nevertheless, when missingness, sampling variability, and spatial dependence are substantial, learning filters in distribution space provides a more flexible alternative to vector-based statistical \gls{GSP}.

Our framework is a generalization of classical \gls{GSP} by replacing the underlying vector space with the Wasserstein space, so that traditional graph signals, which correspond to Dirac delta distributions, become special cases of \glspl{GDS}. This generalization enables a principled treatment of uncertainty and facilitates richer modeling of graph-structured data. Beyond generalizing the signals, we also substitute fixed graph topology with signal-adaptive graph structures, yielding a flexible framework that accommodates uncertainty in both signals and graphs. Methodologically, this shift requires moving away from linear algebra toward tools from analysis and probability theory, giving our approach the flavor of classical Fourier theory \cite{Rud87} rather than that of algebraic signal processing \cite{Pus08}.     

While this greater flexibility brings clear modeling benefits, it also introduces practical challenges. For example, the \gls{GDS} framework requires additional data to accurately estimate distributions, especially in high-dimensional settings. In practice, applications often do not require full distributional knowledge. Instead, one can assume a parametric model like a Gaussian copula to capture key statistical properties. In some other applications like node classification, deep learning models can be employed to learn a proxy distribution given by the softmax output of the model for the node labels, with the GDS framework applied to filter these node distributions for downstream tasks \cite{JiLeeMen23,JiZhaLee25}. Another challenge is increased computational complexity due to operations in the Wasserstein space. However, this can be mitigated by working with tractable families of distributions, as justified by our theory and demonstrated in our numerical experiments (cf.\ \cref{sec:experiments}).

%The central objective of this work is to construct a conceptual dictionary that systematically associates the key concepts, e.g., \gls{GFT}, from classical GSP to their counterparts in the \gls{GDS} framework. The dictionary enables the direct translation of vector-based models and algorithms into the distribution-valued setting, facilitating principled extensions of existing GSP theory. 

Our main contributions are summarized as follows: 
\begin{itemize}
\item We develop a unified \gls{GDS} framework and its generalized extension in which graph signals are modeled as probability measures in Wasserstein space and graph topologies are modeled as signal-adaptive distributions. Within this framework, we define distribution-level analogs of core \gls{GSP} operations (Fourier transform, filtering, and convolution) and provide a systematic dictionary from classical \gls{GSP} to \gls{GDS}, with vector-valued \gls{GSP} recovered as a special case.
\item We establish new analytic guarantees for generalized GDS transforms, including uniform continuity on compact signal domains and pointwise continuity under a Lipschitz condition on the induced transform distribution. These results provide a stability foundation for learning and inference from approximate distributions and perturbed inputs.
\item We illustrate the \gls{GDS} framework in two representative tasks, graph filter learning and anomaly detection, and verify through numerical experiments that the distribution-level formulation is practically effective, particularly under incomplete or misaligned observations.
\end{itemize} 

This paper builds on our preliminary work in \cite{zhao2025GDSICASSP}, which introduced the \gls{GDS} representation of graph signals as probability measures in Wasserstein space. In this extended version, we (i) generalize from fixed graph structure to signal-adaptive random graph structure, (ii) formalize a systematic operator-level correspondence between GSP and GDS, (iii) establish continuity-based stability theory for generalized transforms, and (iv) demonstrate concrete gains in graph filter learning and anomaly detection. These additions address the main limitations of the preliminary version, namely limited formal development and lack of application-level validation.
%A more concise, mathematically focused exposition, including its connection to \cite{JiTayOrt23}, is available in our preprint \cite{Ji23d}.

Because the signals of interest in this work are probability distributions in Wasserstein spaces, we formulate the \gls{GDS} framework in terms of probability measures rather than random variables. To facilitate comparison with classical \gls{GSP}, whenever a main \gls{GDS}-related definition involves a probability distribution $\mu$, we accompany it with \emph{the statement ``$\rhd\ \bx \sim \mu$''}, indicating the random variable $\bx$ distributed according to $\mu$.

The rest of this paper is organized as follows. \cref{sec:wsa} formulates the mathematical setting underlying our 
theory via the Wasserstein space. The complete \gls{GDS} framework is then developed in \cref{sec:gds_framework} and \cref{sec.general_gds}. Example applications of the GDS framework are presented in \cref{sec.G_learning} and \cref{sec:anomaly}, and numerical results in \cref{sec:experiments}, followed by concluding remarks in \cref{sec.con}. 

\emph{Notation.} Scalars and scalar-valued functions are denoted by plain lowercase letters (e.g., $x$), whereas vectors and vector-valued functions are denoted by bold lowercase letters (e.g., $\bx$). Matrices are denoted by bold uppercase letters (e.g., $\bA$), and linear operators are also written in boldface. Probability distributions are denoted by lowercase Greek letters (e.g., $\mu, \nu, \gamma$), with $\delta$ reserved for the Dirac measure. Calligraphic letters denote spaces (e.g., $\calP$), and $G$ denotes a graph throughout. The set of real numbers is denoted by $\Real$, and $M_{N}(\Real)$ denotes the space of $N \times N$ real matrices. The transpose is denoted by $\parens{}\T$, $\Tr(\cdot)$ denotes the trace operator, and $\|\cdot\|_{\mathrm{op}}$ denotes the operator norm induced by the Euclidean norm. The symbol $\circ$ denotes function composition. The pushforward of a measure $\mu$ through a function or operator $\bF$ is denoted by $(\bF)_*\mu$. Throughout this paper, all measures are Borel measures.

\section{Preliminaries: Wasserstein Spaces And Graph Distribution-Valued Signals} 
\label{sec:wsa}

In this section, we first define the Wasserstein space and then introduce the notion of \glspl{GDS} as probability measures in Wasserstein space, which generalizes traditional graph signals. We provide examples to illustrate the concept and underlying intuition.

Throughout this paper, we consider a connected, simple, and undirected graph $G=(V,\bA)$ with $|V|=N$ vertices and (weighted) edges represented by the adjacency matrix $\bA=[a_{ij}]_{i,j\in\calV}$. Its Laplacian is $\bL_{G}=\bD-\bA$, where $\bD$ is the degree matrix. We denote by $\bS_{G}$ a generic graph shift operator, such as $\bA$ or $\bL_{G}$. 

Recall that a traditional graph signal is a vector $\bx \in \bbR^N$ that assigns a real value to each vertex. From a probabilistic perspective, this can be viewed as the Dirac measure $\delta_{\bx}$. This motivates a natural generalization in which graph signals are modeled as probability measures on $\bbR^N$, residing in the Wasserstein space \cite{Vil09}. This probabilistic formulation unifies deterministic signals and their statistical descriptions, providing a foundation for the proposed \gls{GDS} framework.

\begin{Definition}[Wasserstein space]
\label{def.wasse.space}
Let $(\calX, \norm{})$ be a normed space and $\calP(\calX)$ the set of Borel probability measures on $\calX$, i.e., $\calP(\calX):=\set{\mu \given \mu~\text{is a probability measure on}~(\calX, \calB(\calX))}$, where $\calB(\calX)$ is the Borel $\sigma$-algebra. For $p \ge 1$, the Wasserstein space of order $p$ is
\begin{align}
\calP_{p}(\calX) = \set*{\mu \in \calP(\calX)\given \int_{\calX} \norm{\bx}^{p}\ud\mu(\bx) < \infty}.
\end{align}
This space is equipped with the $p$-Wassertain distance: for $\mu_1,\mu_2 \in \calP_{p}(\calX)$,
\begin{align}
W_{p}(\mu_{1},\mu_{2}):= \left(\inf_{\gamma \in \Gamma(\mu_{1},\mu_{2})}\int_{\calX \times \calX} \norm{\bx-\by}^{p}\ud \gamma(\bx,\by)\right)^{1/p},
\end{align}
where $\mu_{1}, \mu_{2}\in \calP_{p}(\calX)$, and $\Gamma(\mu_{1},\mu_{2})$ denotes the set of all couplings of $\mu_{1}$ and $\mu_{2}$, i.e., joint probability measures on $\calX\times\calX$ with marginals $\mu_{1}$ and $\mu_{2}$.
\end{Definition}

For a graph $G$, the node signal sample space is $\calX=\bbR^N$ equipped with the Euclidean norm $\norm{\bx-\by}_{2}$. The corresponding Wasserstein space $\calP_{p}(\bbR^N)$ consists of all probability measures on $\bbR^N$ with finite $p$-th moments. %i.e., $\int_{\bbR^N} \norm{x}_2^p \ud\mu(x) < \infty$.

\begin{Definition}[Graph distribution-valued signal]
\label{def.gds}
A graph distribution-valued signal (GDS) is a Borel probability measure $\mu \in \calP_{p}(\bbR^{N})$. %where $\calP_{p}(\bbR^{N})$ denotes the $p$-Wasserstein space of probability measures on $\bbR^N$, the graph signal sample space. 
The space $\calP_{p}(\bbR^{N})$ is referred to as the space of \glspl{GDS}.
\end{Definition}

The Wasserstein distance measures the minimal ``work'' required to transport one probability distribution to another, where work is the product of the mass and distance transported. Equipped with $W_{p}$, the space $\calP_{p}(\bbR^{N})$ is complete and separable \cite{Vil09}. While computing $W_p$ for general measures is challenging, closed-form solutions exist in special cases.

\begin{Example} \label{eg:imd}
\begin{enumerate}[(a)]
\item \label{it:imd} If $\mu_1 = \delta_{\bx}$ and $\mu_2 = \delta_{\by}$, then $W_{2}(\delta_\bx,\delta_\by) = d(\bx,\by)$. Thus, the space of traditional graph signals $\bbR^N$ embeds isometrically into the space of \gls{GDS}s $\calP_{2}(\bbR^N)$. 

\item Let $\mu_1= \calN(\bm{m}_1,\bSigma_1)$ and $\mu_2= \calN(\bm{m}_2,\bSigma_2)$ be two non-degenerate Gaussian distributions on $\bbR^N$ with means $\bm{m}_1,\bm{m}_2$ and covariance matrices $\bSigma_1,\bSigma_2$ respectively. Then, their 2-Wasserstein distance is given by
\begin{align}
\ml{W_{2}(\mu_1,\mu_2)^2  = \norm{\bm{m}_1-\bm{m}_2}_{2}^2 \\ 
+ \Tr\left(\bSigma_1+\bSigma_2-2\left(\bSigma_2^{1/2}\bSigma_1\bSigma_2^{1/2}\right)^{1/2}\right).}
\end{align}
Therefore, minimizing the discrepancy between two Gaussian distributions in the Wasserstein space involves matching not only their means but also their covariances.

\item Let $\mathrm{GMM}_{K}(\Real^{N})$ denote the set of \glspl{GMM} on $\Real^{N}$ with at most $K$ components; i.e., the set of measures that can be written as
$\mu = \sum_{k=1}^{K'} a_k \mu_k$, where $K'\le K$, $\ba=(a_1,a_2,\dots,a_{K'})\T$, and $\{\mu_k\}_k$ is a family of pairwise distinct Gaussian distributions (each with mean $\bm{m}_k$ and covariance matrix $\bSigma_k$). Let $\mathrm{GMM}_{\infty}(\Real^{N})$ denote the set of all finite Gaussian mixtures on $\Real^{N}$, i.e.,
$\mathrm{GMM}_{\infty}(\Real^{N}) = \bigcup_{K\ge 0} \mathrm{GMM}_{K}(\Real^{N})$.
If $\mu\in \mathrm{GMM}_{K}(\Real^{N})$ and $\nu \in \mathrm{GMM}_{L}(\Real^{N})$, the Mixture-Wasserstein distance of order $2$ \cite{Delon2020} is defined as 
\begin{align*}
\ml{ MW_{2}(\mu, \nu)^{2} \\
= \inf_{\gamma\in \Gamma(\mu,\nu) \cap \mathrm{GMM}_{\infty}(\Real^{2N})} \int_{\Real^{N}\times \Real^{N}} \|\bx-\by\|_{2}^{2} \ud\gamma(\bx,\by),}
\end{align*}
which admits the equivalent formulation \cite{Chen2018}:
\begin{align}
\label{eq.MW2}
MW_{2}(\mu, \nu)^{2}= \inf_{\omega \in \Pi(\ba,\bb)} \sum_{k,l} \omega_{kl} W_{2}(\mu_{k},\nu_{l})^{2}.
\end{align}
where $\ba = (a_{1}, a_{2},\dots,a_{K})\T$ and $\bb = (b_{1}, b_{2},\dots, b_{L})\T$ are the mixture weights, and $\Pi(\ba,\bb)$ denotes the set of admissible couplings. Since the admissible couplings in $MW_2$ form a strict subset of those in $W_2$, this metric generally upper-bounds the classical Wasserstein distance; namely, $MW_2(\mu,\nu) \ge W_2(\mu,\nu)$. In practice, $MW_2(\mu,\nu)$ is often more tractable and can serve as a proxy for $W_2(\mu,\nu)$ in optimization problems.

\end{enumerate}
\end{Example}

\section{The GDS Processing Framework}
\label{sec:gds_framework}

In this section, we introduce a signal processing framework for \glspl{GDS} that generalizes classical \gls{GSP}. We first consider the basic setting in which the graph topology is fixed, and extend core GSP concepts to their GDS counterparts.
To demonstrate the utility of the framework, we present an example application in graph filter learning.
In \cref{sec.general_gds}, we further generalize the framework by allowing the graph distribution to be signal-adaptive.

\subsection{\gls{GDS} Fourier Transform} 

Recall that in traditional \gls{GSP}, given a GSO $\bS_G = \bU_G \bLambda_G \bU_G\T$, where $\bU_G$ is the unitary matrix of eigenvectors and $\bLambda_G$ is the diagonal matrix of eigenvalues, the \gls{GFT} of a graph signal $\bx\in\bbR^N$ is given by the orthogonal basis change $\widehat{\bx}=\bU_{G}\T\bx$. We extend this notion to \glspl{GDS} as follows.

\begin{Definition}[\gls{GDS} Fourier transform]\label{def.GDS_FT}
Let $\bS_{G} = \bU_G \bLambda_G \bU_G\T$ be a GSO, and let $\mu \in \calP_p(\bbR^N)$ be a \gls{GDS} on $G$. The \emph{\gls{GDS-FT}} is the pushforward measure
\begin{align}
\hat{\mu}:= \left(\bU\T_{G}\right)_{*} \mu \in \calP_p(\mathbb{R}^N),
\end{align}
i.e., for any Borel set $B$, $\widehat{\mu}(B) = \mu\parens*{\parens*{\bU\T_{G}}^{-1}(B)}$. 
As a reminder, the symbol $\rhd$ introduces the corresponding random variable interpretation: $\rhd\ \bx \sim \mu,\ \bU_G\T\bx \sim \hat{\mu}$, which establishes the correspondence between classical GSP and our framework. 
\end{Definition}

Intuitively, \gls{GDS-FT} transports a distribution from the vertex domain to the graph frequency domain via the pushforward induced by $\bU\T_{G}$. When $\mu = \delta_{\bx}$ is a Dirac measure, we have $\widehat{\mu} = \delta_{\bU\T_{G} \bx}$, which recovers the classical \gls{GFT}. We have the following properties.
% The inverse \gls{GDS-FT} is given by $\mu = (\bU_\calG)_{\#}\widehat{\mu}$.  

\begin{Proposition}\label{thm:properties_fixed}
Let $G$ be fixed graph with shift operator $\bS_{G}$ and eigenbasis $\bU_{G}$. Then the following properties hold. 
\begin{enumerate}[(i)]
\item \label{item:well-definedness}
(Well-definedness). The GDS-FT is well-defined, i.e., for any $\mu \in \calP_p(\bbR^N)$, we have $\hat{\mu} \in \calP_p(\bbR^N)$.
\item \label{item:isometry_GDS_FT}
(Isometry). The GDS-FT is an isometry with respect to $W_p(\cdot,\cdot)$, i.e., for $\mu, \nu \in \calP_{p}(\bbR^N)$,
\begin{align*}
W_p\left((\bU\T_{G})_{*}\mu,(\bU\T_{G})_{*}\nu\right) = W_p(\mu, \nu).
\end{align*}
\item \label{item.invertibility_GDS_FT} 
(Invertibility). The \gls{GDS}-FT is invertible, with inverse given by $\mu = (\bU_{G})_{*}
\widehat{\mu}$.
\item \label{item:composition}
(Associativity). For $\bF_{G}$ and $\bH_{G}$, we have
\begin{align*}
(\bF_{G})_{*}\left((\bH_{G})_{*}\mu\right) 
= (\bF_{G}\bH_{G})_{*}\mu.
\end{align*}
\item \label{item.lipschitz_stability}
(Lipschitz stability under linear pushforward). For any linear map $\bA\in\Real^{N\times N}$, the corresponding pushforward operator is Lipschitz continuous with respect to $W_p(\cdot,\cdot)$, i.e., for $\mu, \nu \in \calP_{p}(\bbR^N)$,
\begin{align}
\label{eq:linear_pushforward_lipschitz} 
W_p\!\left((\bA)_{\ast}\mu,(\bA)_{\ast}\nu\right) \leq \|\bA\|_{\mathrm{op}} W_p(\mu,\nu).
\end{align}
\end{enumerate}
\end{Proposition}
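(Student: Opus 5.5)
All five properties reduce to two facts: the change-of-variables formula for pushforwards, $\int f\,\ud (\bT)_*\mu = \int f\circ \bT\,\ud\mu$, and the observation that pushing a coupling through a product map gives a coupling of the pushed marginals. The plan is to prove item~\ref{item.lipschitz_stability} first in full generality, for an arbitrary linear $\bA$. Items~\ref{item:well-definedness}, \ref{item:isometry_GDS_FT} and \ref{item.invertibility_GDS_FT} will then follow by specializing to the orthogonal matrices $\bU_G\T$ and $\bU_G$.

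\emph{Well-definedness and the moment bound.} For any linear $\bA$ and $\mu\in\calP_p(\bbR^N)$, change of variables gives
\[
\int\|\by\|^p\,\ud(\bA)_*\mu(\by)=\int\|\bA\bx\|^p\,\ud\mu(\bx)\le \|\bA\|_{\mathrm{op}}^p\int\|\bx\|^p\,\ud\mu(\bx)<\infty .
\]
Hence $(\bA)_*\mu\in\calP_p(\bbR^N)$. Since $\bU_G\T$ is orthogonal, $\|\bU_G\T\bx\|=\|\bx\|$, and item~\ref{item:well-definedness} follows with equality of moments.

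\emph{Lipschitz bound.} Take $\gamma\in\Gamma(\mu,\nu)$ and set $\tilde\gamma=(\bA\times\bA)_*\gamma$. Its marginals are $(\bA)_*\mu$ and $(\bA)_*\nu$, because the projections commute with $\bA\times\bA$. Therefore
\[
W_p\bigl((\bA)_*\mu,(\bA)_*\nu\bigr)^p\le\int\|\bA\bx-\bA\by\|^p\,\ud\gamma\le\|\bA\|_{\mathrm{op}}^p\int\|\bx-\by\|^p\,\ud\gamma .
\]
Taking the infimum over $\gamma$ gives \eqref{eq:linear_pushforward_lipschitz}.

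\emph{Isometry and invertibility.} Since $\|\bU_G\T\|_{\mathrm{op}}=1$, the Lipschitz bound yields $W_p(\hat\mu,\hat\nu)\le W_p(\mu,\nu)$. The reverse inequality comes from the same bound applied to $\bU_G$, which also has operator norm $1$, together with $(\bU_G)_*(\bU_G\T)_*\mu=(\bU_G\bU_G\T)_*\mu=\mu$. That identity uses associativity, item~\ref{item:composition}, so I prove associativity first. For any Borel $B$,
\[
(\bF_G)_*\bigl((\bH_G)_*\mu\bigr)(B)=\mu\bigl(\bH_G^{-1}(\bF_G^{-1}(B))\bigr)=\mu\bigl((\bF_G\bH_G)^{-1}(B)\bigr).
\]
Here $(\cdot)^{-1}$ denotes preimage, and linear maps are continuous, hence Borel measurable. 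Invertibility, item~\ref{item.invertibility_GDS_FT}, is then immediate from associativity and $\bU_G\bU_G\T=\bU_G\T\bU_G=\bI$.

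No step is a genuine obstacle. The only point needing care is checking that the pushforward coupling has the correct marginals; this is the projection-commutation argument above. One should also note that the infimum defining $W_p$ is taken over a nonempty set, which holds because the product measure is always a coupling.
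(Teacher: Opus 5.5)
Your proposal is correct and follows the paper's proof essentially step for step. Both use the pushed-forward coupling $(\bA,\bA)_*\gamma$ and the operator-norm bound for item (v), get the isometry by applying that bound to $\bU_G\T$ and to $\bU_G$ together with $(\bU_G)_*(\bU_G\T)_*\mu=\mu$, and obtain associativity and invertibility from the composition rule for pushforwards. Your explicit moment computation for item (i) fills in well-definedness, which the paper's appendix leaves implicit.
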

\begin{IEEEproof}
See \cref{sec.propo_proof} in the Appendix.
\end{IEEEproof}
% \red{[Move all proofs to the appendix or supplementary (if it is mainly algebraic).]}
  
Property \ref{item:isometry_GDS_FT} states that the GDS-FT preserves the Wasserstein distance between distributions, serving as the distributional counterpart of the Parseval identity.

\subsection{\gls{GDS} Convolutional Filter}

In classical \gls{GSP}, a graph convolutional filter applies a linear transformation that selectively amplifies or suppresses certain frequency components of a graph signal. We now define the \gls{GDS} version as follows.

\begin{Definition}[\gls{GDS} convolutional filter]
\label{def.GDS_filters}
Let $\bF_{G}: \mathbb{R}^N \to \mathbb{R}^N$ be a traditional graph convolutional filter. %, i.e., a polynomial in $\bS_{\calG}$.
The \emph{\gls{GDS} convolutional filter} is the pushforward map
\begin{align}
\calF_{G}: \calP_p(\mathbb{R}^N) \to \calP_p(\mathbb{R}^N), 
\quad \mu \mapsto (\bF_G)_{*}\mu.
\end{align}
\noindent$\rhd\ \bx \sim \mu,\ \bF_G\bx \sim \calF_G(\mu)$
\end{Definition}

When $\mu = \delta_{\bx}$ is a Dirac measure at a single signal $\bx$, we have ${\calF_G}(\mu) = \delta_{{\bF_G} \bx}$, thereby recovering the classical graph convolution ${\bF_G} \bx$ as a special case. More generally, a graph convolutional filter 
\begin{align}\label{graph_convolutional_filter}
\bF_\calG = \bU_G h(\bLambda_G) \bU_G\T,
\end{align} 
where $h(\cdot)$ is a polynomial of degree at most $N-1$, induces a \gls{GDS} convolutional filter  
\begin{align*}
\calF_{G}(\mu) & = (\bF_G)_{*}\mu = (\bU_G h(\bLambda_G) \bU_G\T)_{*}\mu \\ 
&= (\bU_G)_{*} h(\bLambda_G)_{*} (\bU_G\T)_{*}\mu,
\end{align*}
where the last equality follows from the associativity property in \cref{thm:properties_fixed}\ref{item:composition}. A GDS convolution can thus be viewed as a composition of three pushforward maps: a GDS-FT transports the input distribution $\mu$ to the spectral domain, where the filter $h(\bLambda_G)$ reshapes the distribution of frequency components, and finally an inverse GDS-FT transports the filtered distribution back to the vertex domain. 

\begin{Example}
Suppose $\mu$ is a multivariate Gaussian distribution with mean $\bm{m}$ and covariance matrix $\bSigma$. Applying the GDS convolution induced by the linear map $\bF_G$, we obtain that $\calF_{G}(\mu)$ is also Gaussian, with mean $\bF_G\bm{m}$ and covariance $\bF_G\bSigma\bF_G\T$ due to the property of jointly Gaussian distributions. Hence, $\calF_{G}$ modifies not only the central tendency of the signal, represented by its mean, but also its variability and dependencies, as captured by the covariance.
\end{Example}

\begin{Proposition}[Lipschitz continuity]\label{thm:lipschitz_GDS_filtering} 
Any graph convolutional filter $\bF_{G}$ is Lipschitz continuous with respect to the Wasserstein distance, i.e., 
\begin{align}
W_{p}\left((\bF_{G})_{*}\mu,(\bF_{G})_{*}\nu\right) \leq \|\bF_{G}\|_{\mathrm{op}}W_{p}(\mu,\nu)
\end{align}
for all $\mu, \nu \in \calP_p(\bbR^N)$.
\end{Proposition}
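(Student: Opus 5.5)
This is a direct specialization of the Lipschitz stability property in \cref{thm:properties_fixed}\ref{item.lipschitz_stability}. A graph convolutional filter $\bF_G = \bU_G h(\bLambda_G)\bU_G\T$ is a linear map $\Real^N\to\Real^N$, so we would take $\bA=\bF_G$ there. To keep the argument self-contained, we would also reprove that inequality directly by a coupling argument, as follows.

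First, we check well-definedness. Since $\|\bF_G\bx\|_2\le \|\bF_G\|_{\mathrm{op}}\|\bx\|_2$, the change-of-variables formula for pushforwards gives
\begin{align*}
\int \|\by\|_2^p \,\ud(\bF_G)_*\mu(\by)=\int \|\bF_G\bx\|_2^p\,\ud\mu(\bx)\le \|\bF_G\|_{\mathrm{op}}^p\int\|\bx\|_2^p\,\ud\mu(\bx)<\infty .
\end{align*}
Hence $(\bF_G)_*\mu\in\calP_p(\Real^N)$, and both sides of the claimed inequality are finite.

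The core step is the coupling construction. Fix $\mu,\nu\in\calP_p(\Real^N)$ and any $\gamma\in\Gamma(\mu,\nu)$. Set $\tilde\gamma:=(\bF_G\times\bF_G)_*\gamma$, where $(\bF_G\times\bF_G)(\bx,\by)=(\bF_G\bx,\bF_G\by)$. Its marginals are $(\bF_G)_*\mu$ and $(\bF_G)_*\nu$, because projection commutes with the product map. So $\tilde\gamma\in\Gamma((\bF_G)_*\mu,(\bF_G)_*\nu)$, and therefore
\begin{align*}
W_p\left((\bF_G)_*\mu,(\bF_G)_*\nu\right)^p\le \int\|\bF_G\bx-\bF_G\by\|_2^p\,\ud\gamma(\bx,\by)\le \|\bF_G\|_{\mathrm{op}}^p\int\|\bx-\by\|_2^p\,\ud\gamma(\bx,\by).
\end{align*}
Taking the infimum over $\gamma\in\Gamma(\mu,\nu)$ and then $p$-th roots yields the claim. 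An optimal coupling exists, but we do not need it; working with the infimum suffices.

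There is no real obstacle. The only point needing care is verifying that the pushed-forward coupling has the correct marginals. This follows from $\pi_i\circ(\bF_G\times\bF_G)=\bF_G\circ\pi_i$ and the associativity of pushforwards in \cref{thm:properties_fixed}\ref{item:composition}, which holds for arbitrary measurable maps.

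We would conclude with a remark on the constant. Since $\bU_G$ is orthogonal, $\|\bF_G\|_{\mathrm{op}}=\max_i|h(\lambda_i)|$. Thus the Lipschitz constant is the maximal absolute frequency response, consistent with the isometry in \cref{thm:properties_fixed}\ref{item:isometry_GDS_FT}, which corresponds to $|h|\equiv 1$.
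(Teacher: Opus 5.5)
Your proposal is correct and follows the paper's proof. The paper obtains this proposition as the case $\bA=\bF_G$ of \cref{thm:properties_fixed}\ref{item.lipschitz_stability}, which it proves exactly as you do: push a coupling $\gamma\in\Gamma(\mu,\nu)$ forward through $(\bx,\by)\mapsto(\bA\bx,\bA\by)$, bound $\|\bA\bx-\bA\by\|_2\le\|\bA\|_{\mathrm{op}}\|\bx-\by\|_2$, and take the infimum over $\gamma$. Your identification $\|\bF_G\|_{\mathrm{op}}=\max_i|h(\lambda_i)|$ also matches the paper's remark. Your closing aside is slightly loose, since $\bU_G\T$ is not itself of the form $\bU_G h(\bLambda_G)\bU_G\T$; the isometry is simply the orthogonal, unit-operator-norm case.
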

% \begin{IEEEproof}
% Applying \cref{eq:linear_pushforward_lipschitz} with $\bA=\bF_G$ directly gives
% \begin{align} 
% W_p\!\left((\bF_G)_{\ast}\mu,(\bF_G)_{\ast}\nu\right) \leq \|\bF_G\|_2 W_p(\mu,\nu). 
% \end{align}
% In particular, since $\bF_{G}$ is a polynomial in the symmetric matrix $\bS_{G}$, it shares the same orthonormal eigenbasis as $\bS_G$ and can be written as $\bF_G = \bU_G h(\bLambda_G)\bU_G\T$, where $h(\cdot)$ denotes the corresponding polynomial frequency response. Hence, $\|\bF_G\|_2 = \max_i |h(\lambda_i)|$, so the Lipschitz constant is determined by the maximum magnitude of the graph filter frequency response.
% \end{IEEEproof}
% 

\Cref{thm:lipschitz_GDS_filtering} follows immediately from \cref{eq:linear_pushforward_lipschitz}.
From \cref{graph_convolutional_filter}, $\|\bF_G\|_{\mathrm{op}} = \max_i |h(\lambda_i)|$, so the Lipschitz constant is determined by the maximum magnitude of the graph filter frequency response. \Cref{thm:lipschitz_GDS_filtering} guarantees that the GDS convolutional filter is Lipschitz continuous with respect to the Wasserstein distance. This property is particularly relevant in practice, where the true signal distribution $\mu$ is rarely known exactly and must be approximated, e.g., by empirical distributions from finite samples or parametric fits such as Gaussian or Gaussian mixture models. 
%Specifically, if such an approximation $\widetilde{\mu}$ satisfies $W_{p}(\mu, \widetilde{\mu})\leq \epsilon$, then the filtered output satisfies $W_{p}((\bF_{\calG})_{\#}\widetilde{\mu},(\bF_{\calG})_{\#}\mu)\leq \|\bF_{\calG}\|_{2} \epsilon$ thereby 
This ensures that distributional approximations at the input lead to reliable approximations at the output.

\subsection{Example Application: Graph Filter Learning}\label{sec.G_learning}
% Denote by $\mu_{\btheta} \in \mathcal P_{p}(\mathbb R^{N})$ the joint distribution over all graph nodes, parameterized by $\btheta$ and estimated from global observations. In the \gls{GDS} framework, graph filtering corresponds to the pushforward of a distribution through a graph filter: $\calF:\Real^{N}\times\calG_{N}\to \Real^{N}$, given by $\calF(\bx,\calG)=\bF_{\calG}\bx$,  yielding the filtered distribution
% $\mu_{\btheta}':= (\calF)_{\#}\calJ_{\mu_{\btheta},\mu_{\calG}}$. Our objective is to identify the graph filter $\bF_{\calG}$ such that the filtered distribution $\mu_{\btheta}'$ matches a target distribution $\mu^{\star}$ in the Wasserstein space, i.e., $\min_{\bF_{\calG}} W_{2}((\calF)_{\#}\calJ_{\mu_{\btheta},\mu_{\calG}},\mu^{\star})$. 

We present an illustrative graph filter learning example \cite{ortega_2022,Ort18} to demonstrate the GDS processing framework. Let $\mu_{\btheta} \in \mathcal P_{p}(\mathbb R^{N})$ denote the joint distribution over all graph nodes, parameterized by a weight vector $\btheta$. In the \gls{GDS} framework, filtering is formulated as the pushforward of this distribution through a graph filter $\bF_{G}:\mathbb R^{N}\to\mathbb R^{N}$, yielding $\mu_{\btheta}':= (\bF_G)_{*}\mu_{\btheta}$. Our objective is to identify $\bF_{G}$ such that the filtered distribution $\mu_{\btheta}'$ matches a target distribution $\mu^{\star}$ in Wasserstein space:
\begin{align}\label{eq.filter_formulation}
\min_{\bF_{G}} &\ W_{2}((\bF_G)_{*}\mu_{\btheta},\mu^{\star}).
\end{align}
This problem formulates a graph filter learning for GDSs, where the learned filter transports the input distribution $\mu_{\btheta}$ to align with the target distribution $\mu^{\star}$. In this sense, $\bF_{G}$ is learned at the distribution level, enabling the filter to capture graph-dependent transformations of uncertainty and correlations across nodes, rather than only manipulating individual signal realizations.

To make the optimization problem in \cref{eq.filter_formulation} tractable, we introduce explicit parametric forms for the joint distributions $\mu_{\btheta}$ and $\mu^{\star}$. Specifically, we consider two representative schemes: (1) copula-based GDS filter learning (\cref{subsec.copula}), in which the joint distribution is constructed by coupling prescribed marginal distributions through a copula, thereby modeling inter-node dependence separately from the marginals; and (2) \gls{GMM}-based GDS filter learning (\cref{subsec.GMMs}), in which the joint distribution is represented as a mixture of Gaussian components to provide high expressive flexibility.

\subsubsection{Copula-based GDS filter learning}\label{subsec.copula}

Assume that each node $v_i \in V$, $i=1,\dots,N$, is associated with a marginal distribution $\mu_i(\cdot;\btheta_i)\in\calP_{p}(\bbR)$, parameterized by $\btheta_i$ and estimated from local observations. Because synchronous measurements across all nodes may be unavailable, direct estimation of the joint distribution is generally infeasible. We therefore construct a joint distribution $\mu_{\btheta,\kappa}\in \calP_{p}(\bbR^{N})$ by combining the marginals with a copula density $c_\kappa$, which captures inter-node dependence:
\begin{align*}
\mu_{\btheta,\kappa}(x)=c_{\kappa}\left(F_{1}(x_{1};\btheta_{1}),\dots,F_{N}(x_{N};\btheta_{N})\right) \prod_{i=1}^{N} \mu_i(x_i;\btheta_i)
\end{align*}
where $F_i(\cdot;\btheta_i)$ denotes the \gls{CDF} of $\mu_i(\cdot;\btheta_i)$, $\btheta=(\btheta_i)_{i=1}^N$, and $\kappa$ parametrizes the copula. By Sklar's theorem \cite{Sklar1959,Durante2013}, the marginals of $\mu_{\btheta,\kappa}$ are exactly $\mu_i(\cdot;\btheta_i)$ for $i=1,\dots,N$.

In our \gls{GDS} framework, the filtered GDS is defined as $\mu_{\btheta,\kappa}':= (\bF_G)_{*}\mu_{\btheta,\kappa}$. We then reformulate the objective to jointly estimate the graph filter $\bF_{G}$ and the copula density $c_{\kappa}$ such that the filtered GDS $\mu_{\btheta,\kappa}'$ matches a target distribution $\mu^{\star}$:
\begin{align}
\begin{aligned}\label{eq.pro_formu}
\min_{\bF_{G},\kappa} &\ W_{2}((\bF_G)_{*}\mu_{\btheta,\kappa},\mu^{\star})\\
\ST &\ (\pi_i)_{*} \mu_{\btheta,\kappa} = \mu_i, \quad i=1,\dots,N,
\end{aligned}
\end{align}
where $\pi_i$ denotes the projection onto the $i$-th coordinate. This formulation characterizes graph filter learning in which the marginals encode local uncertainty, the copula captures inter-node dependencies, and the learned filter aligns the resulting GDS with the target distribution.

To obtain an explicit form of \cref{eq.pro_formu}, we adopt a Gaussian parameterization. Each node $v_i$ is associated with a marginal distribution $\mu_i=\calN(m_i,\sigma_i^2)$, where $\btheta_i= (m_i, \sigma_i^2)$ are estimated from local observations at node $v_i$. Inter-node dependence is modeled by a Gaussian copula with correlation matrix $\bm{R}$, which induces the joint distribution $\mu=\calN(\bm{m},\bm{\Sigma})$, where $\btheta = (\bm{m}, \bSigma)$, $\bm{m}=(m_1,\dots,m_N)\T$ and $\bm{\Sigma}=\bm{D}\bm{R}\bm{D}$ with $\bm{D}=\mathrm{diag}(\sigma_1,\dots,\sigma_N)$. Applying the graph filter $\bF_G$ to $\mu$ produces another Gaussian distribution:
\begin{align*}
(\bF_G)_{*}\mu =\calN(\bF_{G}\bm{m}, \bF_{G}\bm{\Sigma}\bF\T_{G} ).
\end{align*}

When the target distribution is also modeled as Gaussian, i.e., $\mu^{\star}=\calN(\bm{m}^{\star},\bm{\Sigma}^{\star})$, the squared Wasserstein distance admits a closed-form expression (see \cref{eg:imd}). Consequently, \cref{eq.pro_formu} reduces to:
\begin{align}
\begin{aligned}
\label{eq.opt}
\min_{\bF_{G}, \bm{R}} &\ \norm{\bF_{G}\bm{m}-\bm{m}^{\star}}_{2}^{2} + \Tr\Big\{\bF_{G}\bm{\Sigma}\bF\T_{G} +\bm{\Sigma}^{\star}\\
&\qquad\qquad- 2 \big( (\bm{\Sigma}^\star)^{1/2}\bF_{G}\bm{\Sigma}\bF_{G}\T(\bm{\Sigma}^\star)^{1/2} \big)^{1/2}\Big\}\\
\ST &\ \bm{R}=\bm{R}\T,\quad \bm{R}\succeq \mathbf{0}, \quad \diag(\bm{R})=\bone.
\end{aligned}
\end{align}

We solve this optimization problem using alternating minimization over $\bF_{G}$ and $\bm{R}$, with each subproblem updated via gradient descent \cite{Lecun1998}. For a graph convolutional filter $\bF_G$ of the form specified in \cref{graph_convolutional_filter}, the optimization is performed over the coefficients of the polynomial $h(\cdot)$.
The complete procedure is summarized in \cref{alg:Cop_filter_learning}.

\begin{algorithm}[tb]
\caption{Copula-based GDS graph filter learning algorithm}
\label{alg:Cop_filter_learning}
\begin{algorithmic}[1]
\State Input: Target mean $\bm{m}^\star$ and covariance $\bm{\Sigma}^\star$, marginal means $\bm{m}$, variances $\bm{D} = \mathrm{diag}(\sigma_1, \dots, \sigma_N)$, learning rates $\eta_1$ and $\eta_2$, convergence tolerance $\epsilon$, positive threshold $\delta$. $\calL$ is the objective function in \cref{eq.opt}. 
\State Output: Optimized $\widetilde{\bF}_{G}$
\State Initialize $\bF_{G}^{(0)}$, $\bm{R}^{(0)}$, $u = 0$
\Repeat
\State $\bm{\Sigma}^{(u)} \gets \bm{D} \bm{R}^{(u)} \bm{D}$
\State $\calL^{(u)} \gets \cref{eq.opt}$ 
% using
% \begin{align*}
%     \calL^{(u)} = \norm{\bF_{\calG}^{(u)}\bm{m} - \bm{m}^{\star}}_2^2 
%     + \Tr\Big( \bF_{\calG}^{(u)} \bm{\Sigma}^{(u)} \bF_{\calG}^{(u)\top} + \bm{\Sigma}^\star\\ 
%     - 2\big((\bm{\Sigma}^\star)^{1/2} \bF_{\calG}^{(u)} \bm{\Sigma}^{(u)} \bF_{\calG}^{(u)\top} (\bm{\Sigma}^\star)^{1/2} \big)^{1/2} \Big)
% \end{align*}
\State $\bF_{G}^{(u+1)} \gets \bF_{G}^{(u)} - \eta_1 \nabla_{\bF_{G}} \calL^{(u)}$
\State $\bm{R}^{(u+1)} \gets \bm{R}^{(u)} - \eta_2 \nabla_{\bm{R}} \calL^{(u)}$
\State Project $\bm{R}^{(u+1)}$ to a valid correlation matrix:
% \begin{itemize}
%     \item Symmetrize: $\bm{R}^{(u+1)} \gets (\bm{R}^{(u+1)} + \bm{R}^{(u+1)\top}) / 2$
%     \item Project to PSD: $\bm{R}^{(u+1)} \gets \bm{V} \cdot \operatorname{diag}(\max(\bm{\lambda}, \epsilon)) \cdot \bm{V}\T$
%     \item Normalize: $\bm{R}^{(u+1)} \gets \bm{S}^{-1} \bm{R}^{(u+1)} \bm{S}^{-1}$,\\
%     $\bm{S} = \operatorname{diag}\left(\sqrt{\operatorname{diag}(\bm{R}^{(u+1)})}\right)$
%     \item Set diagonal: $\diag(\bm{R}^{(u+1)}) \gets \mathbf{1}$
% \end{itemize}
\State $\bm{R}^{(u+1)} \gets \tfrac{1}{2} \left( \bm{R}^{(u+1)} + \bm{R}^{(u+1){\T}} \right)$ \hfill \Comment{Symmetrize}
\State $\bm{R}^{(u+1)} \gets \bm{V} \diag(\max(\bm{\lambda}, \delta)) \bm{V}\T$ \hfill \Comment{PSD projection}
\State $\bm{S} \gets \diag\left( \sqrt{ \diag(\bm{R}^{(u+1)}) } \right)$
\State $\bm{R}^{(u+1)} \gets \bm{S}^{-1} \bm{R}^{(u+1)} \bm{S}^{-1}$ \hfill \Comment{Normalize}
\State $\diag(\bm{R}^{(u+1)}) \gets \mathbf{1}$ \Comment{Set diagonal}
\State $u \gets u + 1$
\Until{$\norm{\calL^{(u)} - \calL^{(u-1)}} \le \epsilon$}
\end{algorithmic}
\end{algorithm}

\subsubsection{GMM-based GDS filter learning}\label{subsec.GMMs}

To balance expressive power and analytical tractability, we propose to model the joint distribution across all graph nodes using a \gls{GMM}. Specifically, we assume 
\begin{align}
\label{eq.mu_gmm}
\mu_{\btheta} = \sum_{k=1}^{K}a_{k}\xi_{k} \in \mathrm{GMM}_{K}(\Real^{N}),    
\end{align}
where each component $\xi_{k}$ is a multivariate Gaussian $\xi_{k}=\calN(\bm{m}_{k},\bSigma_{k})$, and the mixture weights satisfy $a_{k}\geq 0$ and $\sum_{i=1}^{K}a_{k}=1$. 

With this model, the marginal distribution of the $i$-th node $\mu_{i}:=(\pi_{i})_{*}\mu_{\btheta}$ is a univariate GMM with the same mixture weights, i.e.,
\begin{align}
\begin{aligned}
\label{eq.mu_igmm}
\mu_{i}=\sum_{k=1}^{K} a_{k}&\calN(m_{ik},\sigma_{ik}^{2}),\quad i=1,\ldots, N, 
\end{aligned}   
\end{align}
where $m_{ik}$ and $\sigma_{ik}^{2}$ denote the mean and variance of the $i$-th coordinate under the $k$-th Gaussian component $\xi_k$. Moreover, within each component $k$, the dependence structure is characterized by a Gaussian copula with correlation matrix $\bR_{k}$. Equivalently, the covariance can be decomposed as $\bSigma_{k}=\bD_{k}\bR_{k}\bD_{k}$ with $\bD_{k}:=\diag(\sigma_{1k},\cdots,\sigma_{nk})$, and the component mean vector is $\bm{m}_{k}=(m_{1k},\cdots,m_{Nk})\T$. 

Similarly, the target distribution is modeled as $\mu^{\star}=\sum_{l=1}^{L}b_{l}\xi_{l}^{\star}\in \mathrm{GMM}_{L}(\Real^{N})$ with at most $L$ Gaussian components, where each component is given by $\xi_{l}^{\star}=\calN(\bm{m}_{l}^{\star},\bSigma_{l}^{\star})$.  
Under these assumptions, the filter learning problem \cref{eq.filter_formulation} is
\begin{align}
\begin{aligned}
\label{eq.GF_GMM}
&\qquad\qquad \min_{\bF_{G}} \ W_{2}((\bF_G)_{*}\mu_{\btheta},\mu^{\star})\\
&\ST \mu_{\btheta}\in \mathrm{GMM}_{K}(\Real^{N})~\text{with}~(\pi_i)_{*} \mu_{\btheta}=\mu^{(i)}\\ 
&\qquad\qquad ~\text{and}~\mu^{\star}\in \mathrm{GMM}_{L}(\Real^{N}),
\end{aligned}
\end{align}
where $(\bF_G)_{*}\mu_{\btheta}=\sum_{k=1}^{K} a_{k} (\bF_G)_{*}\xi_{k}$.

For tractability, we approximate the Wasserstein distance $W_{2}$ by the Mixture-Wasserstein distance $MW_{2}$ defined in \cref{eq.MW2}. It admits efficient computation via component-wise optimal transport and provides a practical surrogate for $W_{2}$ \cite{Vil09,Cuturi2013,Chen2018}. This leads to the following relaxed formulation:
% \begin{align}
% \begin{aligned}
% \label{for.WM_2}
%     &\min_{\bF_{\calG},\omega \in \Pi(\ba,\bb)} \ \sum_{k,l}\omega_{kl} W_{2}((\bF_{\calG})_{\#}\mu_{k},\nu_{l}^{\star})\\
%     &\ST \ba=\left(a_1,\dots,a_{K}\right)\T, \bb=\left(b_1,\dots,b_{L}\right)\T \\
%     & \quad \Pi(\ba,\bb)=\set{\omega\in\Real_{+}^{K\times L}\given \omega\bone = \ba~\text{and}~\omega\T\bone=\bb}.
% \end{aligned}
% \end{align}

\begin{align}
\begin{aligned}
\label{for.WM_2}
&\min_{\bF_{G},\{\bR_{k}\}_{k=1}^{K},\bomega \in \Pi(\ba,\bb)} \ \sum_{k,l}\bomega_{kl} W_{2}^{2}((\bF_{G})_{*}\xi_{k},{\xi_{l}}^{\star})\\
&\ST \ba=\left(a_1,\dots,a_{K}\right)\T, \bb=\left(b_1,\dots,b_{L}\right)\T \\
&\quad \Pi(\ba,\bb)=\set{\bomega\in\Real_{+}^{K\times L}\given \bomega\bone = \ba~\text{and}~\bomega\T\bone=\bb}\\
&\quad \bm{\bR}_{k}=\bm{\bR}_{k}\T,\quad \bm{\bR}_{k}\succeq \mathbf{0}, \quad \diag(\bm{\bR}_{k})=\bone, \forall k.
\end{aligned}
\end{align}
Here, $W_{2}((\bF_{G})_{*}\xi_{k},\xi_{l}^{\star})^2$ admits the close-form expression:
\begin{align}
\begin{aligned}
\label{eq.cost_matrix}
&W_{2}^{2}((\bF_{G})_{*}\xi_{k},\xi_{l}^{\star})= \norm{\bF_{G}\bm{m}_{k}-\bm{m}_{l}^{\star}}_{2}^{2} \\
& + \Tr\braces*{ \bF_{G}\bm{\Sigma}_{k}\bF\T_{G} +\bm{\Sigma}_{l}^{\star} - 2\big( (\bm{\Sigma}_{l}^{\star})^{1/2}\bF_{G}\bm{\Sigma}_{k}\bF_{G}\T(\bm{\Sigma}_{l}^{\star})^{1/2} \big)^{1/2} }.   
\end{aligned}
\end{align}

We solve this problem via alternating minimization. Specifically, we update the graph filter $\bF_{G}$ and the correlation matrices $\{\bR_{k}\}_{k=1}^{K}$ using gradient descent, while recomputing the transport plan $\bomega$ at each iteration by solving the entropically regularized optimal transport problem \cite{Vil09,Cuturi2013}, i.e.,
\begin{align}
\label{eq.op_plan}
\bomega^{\star}\in \argmin_{\bm{\omega}\in\Pi(\ba,\bm{b})}\;
\ip*{\bm{\omega}}{\bC} + \varepsilon \sum_{k,l}\bomega_{kl}(\log\bomega_{kl}-1),
\end{align}
where $\bC$ is the cost matrix (i.e., $\bC_{kl}=W_{2}^{2}((\bF_{G})_{*}\xi_{k},\xi_{l}^{\star})$) defined in \cref{eq.cost_matrix}, $\varepsilon>0$ is the entropic regularization parameter, $\Pi(\ba,\bm{b})$ denotes the set of couplings with prescribed marginals $\ba$ and $\bm{b}$, and $\ip{\cdot}{\cdot}$ denotes the Frobenius inner product. We compute $\bomega^{\star}$ efficiently using the Sinkhorn algorithm \cite{Sinkhorn1974,Altschuler2017}. The overall procedure is summarized in \cref{alg:gmms_filter_learning}.

%and is referred to as the \emph{GMMs-based GDS (GDS-GMMs)} graph filter learning algorithm, where $\calL$ denotes the objective function in \cref{for.WM_2}. %The learned graph filter $\widetilde{\bF}_{\calG}$ can be readily applied to downstream prediction tasks, as shown in \cref{eq.for_predict}.

\begin{algorithm}[tb]
\caption{GMMs-based GDS graph filter learning algorithm}
\label{alg:gmms_filter_learning}
\begin{algorithmic}[1]
\State Input: Source GMM $\{(\bm{m}_{k},\bm{\Sigma}_{k},a_{k})\}_{k=1}^{K}$, target GMM $\{(\bm{m}_{l}^\star, \bm{\Sigma}_{l}^\star, b_{l})\}_{l=1}^{L}$, learning rate $\eta_{1}$ and $\eta_{2}$, convergence tolerance $\epsilon$, regularization coefficient $\varepsilon$, positive threshold $\delta$. $\calL$ is the objective function in \cref{for.WM_2}.
% , the number of Gaussian components in GMMs $L$ and $K$
\State Output: Optimized $\widetilde{\bF}_{G}$
\State Initialize $\bF_{G}^{(0)}$, $(\bm{R}^{(0)}_{k})_{k=1}^{K}$ and $u = 0$
\Repeat
\State $W_{2}^{2}((\bF_{G}^{(u)})_{\ast}\xi_{k},\xi_{l}^{\star})^{2} \gets \cref{eq.cost_matrix}$ \hfill \Comment{Compute cost matrix}
% \State $\omega^{(u)} \gets \text{solve \cref{for.WM_2} via Sinkhorn Algorithm}$ \hfill \Comment{Update transport plan}
% \State Solve \cref{for.WM_2} via entropic OT with Sinkhorn Algorithm:
\State $\bomega^{(u)} \gets \cref{eq.op_plan}$    \hfill \Comment{Update transport plan} 
\State $\calL^{(u)} \gets \text{compute}~\cref{for.WM_2}$ with $\bomega^{(u)}$  \hfill \Comment{Update loss}
\State $\bF_{G}^{(u+1)} \gets \bF_{G}^{(u)} - \eta_{1} \nabla_{\bF_{G}} \calL^{(u)}$
\For {$k=1,\dots,K$}
\State $\bm{R}^{(u+1)}_{k} \gets \bm{R}^{(u)}_{k} - \eta_{2} \nabla_{\bm{R}_{k}} \calL^{(u)}$
\State Project $\bm{R}_{k}^{(u+1)}$ to a valid correlation matrix:
\State $\bm{R}_{k}^{(u+1)} \gets \tfrac{1}{2} \left( \bm{R}_{k}^{(u+1)} + \bm{R}_{k}^{(u+1){\T}} \right)$ \hfill \Comment{Symmetrize}
\State $\bm{R}^{(u+1)}_{k} \gets \bm{V}_{k} \diag(\max(\bm{\lambda}_{k}, \delta)) \bm{V}\T_{k}$ \hfill \Comment{PSD}
\State $\bm{S}_{k} \gets \diag\left( \sqrt{ \diag(\bm{R}^{(u+1)}_{k}) } \right)$
\State $\bm{R}^{(u+1)}_{k} \gets \bm{S}_{k}^{-1} \bm{R}^{(u+1)}_{k} \bm{S}^{-1}_{k}$ \hfill \Comment{Normalize}
\State $\diag(\bm{R}^{(u+1)}_{k}) \gets \mathbf{1}$ \Comment{Set diagonal}
\EndFor
\State $u \gets u + 1$
\Until{$\norm{\calL^{(u)} - \calL^{(u-1)}} \le \epsilon$}
\end{algorithmic}
\end{algorithm}

\section{The Generalized GDS Framework}
\label{sec.general_gds}

We next generalize the GDS processing framework introduced in \cref{sec:gds_framework} by allowing the graph topology to follow a signal-dependent distribution \cite{JiTayOrt23}.

\subsection{Signal Adaptive Graph Structures}

We propose to treat the graph as a data-dependent representation whose connectivity varies with the observed signal (see \cref{Fig.SAGS}). In practice, statistical dependencies among nodes are often induced by latent physical or functional mechanisms that manifest in the measurements themselves. Consequently, a graph constructed from the data---e.g., by linking nodes with strong correlation or coherence---may vary across signal realizations, even when the vertex set and sensing locations remain fixed. This motivates modeling the graph as a \emph{signal-adaptive} (and potentially random) object:
given a graph signal $\bx\in\Real^{N}$, we associate with it a conditional distribution $\nu_{\bx}$ over graphs that captures the plausible connectivity patterns consistent with $\bx$. We formalize this notion via \emph{\glspl{SAGS}}. Throughout this paper, we view $\calG_N$ as parametrized by adjacency matrices, and identified with a Euclidean space.

\begin{Definition}[Signal adaptive graph structure (SAGS)]
\label{def.SAGS}
Let $V=\set{v_{1},\dots,v_{N}}$ be an ordered set of $N$ vertices, and let $\calG_N$ denote the space of all weighted graphs on the fixed vertex set $V$. A SAGS $\calA$ is, for each $\bx \in \Real^N$, a conditional probability distribution $\nu_{\bx}(\cdot) = \calA(\bx, \cdot) \in \calP_{p}(\calG_{N})$.
$\rhd\ G \mid \bx \sim \nu_{\bx}$
\end{Definition}

More precisely, adopting probability-theoretic notations \cite[Section 4.1.3]{Dur19}, for any Borel set $B \subseteq \calG_N$, $\calA(\bx, B)$ is a version of $\P(G \in B \mid \sigma(\bx))$, where $\sigma(\bx)$ denotes the $\sigma$-algebra generated by the random element $\bx$. Moreover, for \emph{each} $\bx$, $\nu_{\bx}(\cdot) := \calA(\bx, \cdot)$ defines a probability measure on $\calG_N$. 
Note that in this work, for simplicity, we assume that $\nu_{\bx}$ is well-defined for every $\bx\in\Real^N$, instead of only for $\mu$-a.e. $\bx$. This simplification avoids dependency of $\calA$ on $\mu$, and measure-theoretic difficulties, which require additional technical assumptions to resolve. 
In the sequel, we adopt the simplified notation introduced in \cref{def.SAGS}.

If $\nu_{\bx}=\nu$ for all $\bx$ (i.e., it is independent of $\bx$), we call it a constant SAGS \cite{JiTayOrt23}. Moreover, if $\nu_{\bx}=\delta_{G}$ for a fixed graph $G\in\calG_{N}$, then the SAGS framework reduces to the classical GSP setting with a single deterministic underlying graph.

For any $\mu\in \calP_{p}(\Real^{N})$, we have an associated joint distribution $\calA^*\mu$ on $\Real^{N} \times \calG_{N}$, given by
\begin{align}
\label{eq:cmi}
(\calA^*\mu)(B) = \int_{B} \ud\nu_{\bx}(G) \ud\mu(\bx)
\end{align}
for any Borel set $B \subseteq \Real^{N} \times \calG_{N}$. This is the pullback map of distributions $\calP_{p}(\mathbb{R}^N)\to \calP_{p}(\mathbb{R}^N\times \calG_{N})$ (i.e., the joint distribution of $(\bx, G)$ in the $p$-th order Wasserstein space), as confirmed by the following result.

\begin{Lemma}\label{Lemma_pullback}
For any GDS $\mu \in \calP_p(\Real^N)$ and SAGS $\calA$ such that 
\begin{align}\label{eq:sags-moment}
 \int_{\Real^{N}} W_p^p(\nu_{\bx}, \delta_{G_0})\ud\mu(\bx) < \infty   
\end{align}
for some fixed $G_0 \in \calG_N$, the distribution $\calA^*\mu$ defined in \cref{eq:cmi} belongs to $\calP_{p}(\Real^{N}\times \calG_{N})$. $\rhd\ (\bx, G) \sim \calA^*\mu$
\end{Lemma}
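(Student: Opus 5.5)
The plan is to verify two things. First, that $\calA^*\mu$ is a genuine Borel probability measure on $\Real^N\times\calG_N$. Second, that it has a finite $p$-th moment with respect to a norm on the product space.

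For the first point, the map $\bx\mapsto \nu_{\bx}(C)$ is measurable for Borel $C\subseteq\calG_N$, since $\calA$ is a probability kernel (a version of a regular conditional distribution). A monotone class / $\pi$-$\lambda$ argument then extends measurability to $\bx\mapsto \int \mathbf{1}_B(\bx,G)\,\ud\nu_{\bx}(G)$ for every Borel $B$ in the product. This makes \cref{eq:cmi} well defined. Countable additivity follows from monotone convergence, and total mass one follows from $\nu_{\bx}(\calG_N)=1$ and $\mu(\Real^N)=1$. This is the standard construction of a measure from a marginal and a kernel, so I would cite it (e.g.\ \cite{Dur19}) rather than redo it.

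For the moment bound, equip $\Real^N\times\calG_N$ (with $\calG_N$ identified with a Euclidean space via adjacency matrices) with a norm such as $\|(\bx,G)\|=\|\bx\|+\|G\|$. I use the convexity inequality $(a+b)^p\le 2^{p-1}(a^p+b^p)$, followed by Tonelli's theorem for the nonnegative integrand, to get
\begin{align*}
\int \|(\bx,G)\|^p\,\ud(\calA^*\mu) \le 2^{p-1}\int_{\Real^N}\|\bx\|^p\ud\mu(\bx) + 2^{p-1}\int_{\Real^N}\int_{\calG_N}\|G\|^p\,\ud\nu_{\bx}(G)\,\ud\mu(\bx).
\end{align*}
The first term is finite because $\mu\in\calP_p(\Real^N)$.

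For the second term, the key observation is that the only coupling of $\nu_{\bx}$ with a Dirac mass is the product, so
\begin{align*}
W_p^p(\nu_{\bx},\delta_{G_0})=\int\|G-G_0\|^p\,\ud\nu_{\bx}(G).
\end{align*}
Then $\|G\|^p\le 2^{p-1}(\|G-G_0\|^p+\|G_0\|^p)$ bounds the inner integral by $2^{p-1}\bigl(W_p^p(\nu_{\bx},\delta_{G_0})+\|G_0\|^p\bigr)$. Integrating in $\mu$ and using hypothesis \cref{eq:sags-moment} gives finiteness.

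The only delicate step is the measurability of $\bx\mapsto W_p^p(\nu_{\bx},\delta_{G_0})$, so that the integral in \cref{eq:sags-moment} makes sense. This comes out of the same identity: the quantity is an integral of the nonnegative measurable function $\|G-G_0\|^p$ against the kernel, and is therefore measurable by the monotone class step above. The choice of product norm is immaterial, since all norms on the finite-dimensional space $\Real^N\times\calG_N$ are equivalent.
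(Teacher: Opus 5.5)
Your proposal is correct and follows essentially the same route as the paper: construct $\calA^*\mu$ from the marginal $\mu$ and the kernel $\bx\mapsto\nu_{\bx}$, then use Tonelli and the Dirac-coupling identity $W_p^p(\nu_{\bx},\delta_{G_0})=\int\|G-G_0\|^p\,\ud\nu_{\bx}(G)$ to reduce the $p$-th moment to $\int\|\bx\|^p\,\ud\mu$ plus the hypothesis \cref{eq:sags-moment}. The differences are cosmetic: the paper uses the $\ell^p$-type product metric centered at $(\mathbf{0},G_0)$, so the split is an exact equality, whereas you center at the origin and absorb $\|G_0\|^p$ with $2^{p-1}$ constants; you also check measurability of $\bx\mapsto W_p^p(\nu_{\bx},\delta_{G_0})$, which the paper leaves implicit.
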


\begin{IEEEproof}
    See \cref{proof_lemma1} in the Appendix.
\end{IEEEproof}

% \begin{Example} \label{eg:inn}

% \item \red{\label{it:agc} A SAGS $\calA_{\calG}=(\mu_{\calG}(\cdot|\bx))_{\bx\in\mathbb{R}^N}$ is \emph{locally constant} if for almost every $\bx \in \mathbb{R}^N$, there is an open neighborhood $U_{\bx}$ of $\bx$ such that for every $\by\in U_{\bx}$, we have $\mu_{\calG}(\cdot|\by)=\mu_{\calG}(\cdot|\bx)$. Intuitively, for such an $\calA_{\calG}$, the signal space $\mathbb{R}^N$ can be (almost) partitioned into open subsets on each of which $\calA_{\calG}$ is a constant.}
% \end{Example}

An SAGS $\calA$ encodes the structural information of the underlying graph in a signal-dependent manner. For a given signal $\bx$, it determines the plausible graph structures on $V$ through the measure $\nu_{\bx}$. This concept is illustrated in \cref{Fig.SAGS}. The next subsection describes how \gls{GDS} transformations are performed under this signal-adaptive graph structure, extending the framework of \cref{sec:gds_framework}.

\begin{figure}[!htb]
\centering

\begin{tikzpicture}[scale=0.65]

% background
\fill[gray!15] (-0.2,-0.2) rectangle (12,8);

% dashed separators
% \draw[dash dot dot, thick] (6,0) -- (6,8);  % 
% \draw[dash dot dot, thick] (0,3.5) -- (12,3.5);

% ---------------- Top-left: x1 ----------------
% blue stems (x,y,length)
\foreach \x/\y/\l in {
1.5/6.3/1.4,
2.5/6.8/0.95,
3.5/6.5/0.5,
4.5/5.8/0.3
}
{
\draw[blue, line width=0.8pt] (\x,\y) -- ++(0,\l);
\fill[red] (\x,\y) circle (0.10);
}
% black stems
\foreach \x/\y/\l in {
2/5.6/1.2,
2.8/5.1/0.7,
3.8/5.1/0.4
}
{
\draw[blue, line width=0.8pt] (\x,\y-\l) -- (\x,\y);
\fill[red] (\x,\y) circle (0.10);
}

% \node[font=\bfseries\large] at (5,4.35) {$\bx_1$};

\node[font=\fontsize{9}{10}\selectfont] at (3,4) {(a) $\bx_1$};
\node[font=\fontsize{9}{10}\selectfont] at (3+6,4) {(b) Graph samples from $\nu_{\bx_1}$};

% -------------Top right --------------------------------------------------------
% sample 1
\coordinate (A) at (1.5+6,6.3+1);
\coordinate (B) at (2.5+6,6.8+1);
\coordinate (C) at (3.5+6,6.5+1);
\coordinate (D) at (4.5+6,5.8+1);
\coordinate (E) at (3.8+6,5.3+1+0.2);
\coordinate (F) at (2.8+6,5.3+1+0.2);
\coordinate (G) at (2+6, 5.8+1+0.2);
\draw[gray, line width=0.9pt]
(A)--(B)--(C)--(D)--(E)--(F)--(G);
\foreach \P in {A,B,C,D,E,F,G}
\fill[red] (\P) circle (0.1);
% sample 2
\coordinate (A) at (1.5+6,6.3-0.8);
\coordinate (B) at (2.5+6,6.8-0.8);
\coordinate (C) at (3.5+6,6.5-0.8);
\coordinate (D) at (4.5+6,5.8-0.8);
\coordinate (E) at (3.8+6,5.3-0.8+0.2);
\coordinate (F) at (2.8+6,5.3-0.8+0.2);
\coordinate (G) at (2+6,5.8-0.8+0.2);
\draw[gray, line width=0.9pt]
(A)--(B)--(C)--(D)--(E)--(F)--(G)
(C)--(E);
\foreach \P in {A,B,C,D,E,F,G}
\fill[red] (\P) circle (0.1);

% -------------------------Bottom left-----------------------------------------------

\foreach \x/\y/\l in {
1.5/6.3-4.4/1.4,
2.5/6.8-4.4/0.8,
3.5/6.5-4.4/0.5,
4.5/5.8-4.4/0.3
}
{
\draw[blue, line width=0.8pt] (\x,\y) -- ++(0,\l);
\fill[red] (\x,\y) circle (0.10);
}
% black stems
\foreach \x/\y/\l in {
2/5.6-4.4/1.2,
2.8/5.1-4.4/0.9,
3.8/5.1-4.4/0.5
}
{
\draw[blue, line width=0.8pt] (\x,\y) -- ++(0,\l);
\fill[red] (\x,\y) circle (0.10);
}

% ------------------------ Bottom right ------------------------------------------
% sample 1
\coordinate (A) at (1.5+6,6.3+1-4.2);
\coordinate (B) at (2.5+6,6.8+1-4.2);
\coordinate (C) at (3.5+6,6.5+1-4.2);
\coordinate (D) at (4.5+6,5.8+1-4.2);
\coordinate (E) at (3.8+6,5.3+1-4.2+0.2);
\coordinate (F) at (2.8+6,5.3+1-4.2+0.2);
\coordinate (G) at (2+6, 5.8+1-4.2+0.2);
\draw[gray, line width=0.9pt]
(A)--(B)--(C)--(D)--(E)--(F)--(G)
(A)--(G);
\foreach \P in {A,B,C,D,E,F,G}
\fill[red] (\P) circle (0.1);
% sample 2
\coordinate (A) at (1.5+6,6.3-0.8-4.15);
\coordinate (B) at (2.5+6,6.8-0.8-4.15);
\coordinate (C) at (3.5+6,6.5-0.8-4.15);
\coordinate (D) at (4.5+6,5.8-0.8-4.15);
\coordinate (E) at (3.8+6,5.3-0.8-4.15+0.2);
\coordinate (F) at (2.8+6,5.3-0.8-4.15+0.2);
\coordinate (G) at (2+6,5.8-0.8-4.15+0.2);
\draw[gray, line width=0.9pt]
(A)--(B)--(C)--(D)--(E)--(F)--(G)
(C)--(E) (A)--(G) (B)--(F);
\foreach \P in {A,B,C,D,E,F,G}
\fill[red] (\P) circle (0.1);

\node[font=\fontsize{9}{10}\selectfont] at (3,0) {(c) $\bx_2$};
\node[font=\fontsize{9}{10}\selectfont] at (3+6,0) {(d) Graph samples from $\nu_{\bx_2}$};

\end{tikzpicture}
\caption{An illustration of SAGS. For two distinct signals $\bx_1, \bx_2$, the likelihood of a graph structure is determined by the smoothness of the signal on that structure, resulting in distinct distributions $\nu_{\bx_1}, \nu_{\bx_2}$. (a)(c) Two distinct signals. (b)(d) Representative graph structures sampled from $\nu_{\bx_{1}}$ and $\nu_{\bx_{2}}$, respectively.}
\label{Fig.SAGS}
\end{figure}

\subsection{Generalized GDS Processing Framework}

\subsubsection{Generalized GDS transforms} We extend the \gls{GDS} Fourier transform and \gls{GDS} convolutional filtering to the SAGS setting. These operations, referred to as the generalized \gls{GDS} Fourier transform and generalized \gls{GDS} convolutional filtering, are defined below.

\begin{Definition}[Generalized \gls{GDS} Fourier transform]
\label{Def.GGDS_FT}
For a given graph signal $\bx\in\Real^N$ and graph $G\in\calG_N$, let $\Phi(\bx, G):=\bU_{G}\T \bx$ denote the GFT of $\bx$ with respect to $G$. For a GDS $\mu \in \calP_p(\bbR^N)$ and a SAGS $\calA$, the generalized \emph{\gls{GDS-FT}} is the pushforward measure
\begin{align}\label{eq.GGDS-FT}
\hat{\mu}:= (\Phi_{*} \circ \calA^{*})\mu,
\end{align}
i.e., for any Borel set $ B \in \calB(\bbR^N)$, 
\begin{align}
\widehat{\mu}(B)
&=(\calA^{*}\mu) \left(\Phi^{-1}(B)\right) \nn
&=(\calA^*\mu)\parens*{\set*{(\bx, G) \given \bU_{G}\T \bx \in B}}.
\end{align}
\noindent$\rhd\ \bx \sim \mu,\ G\mid\bx \sim \nu_{\bx},\ \bU_G\T\bx \sim \hat{\mu}$
\end{Definition}

Intuitively, the generalized \gls{GDS-FT} transforms a distribution from the vertex domain to the frequency domain by jointly accounting for both the signal and the underlying graph structure through the joint distribution $\calA^{*}\mu$. For each pair $(\bx, G)$ drawn from the distribution $\calA^{*}\mu$, the signal $\bx$ is projected onto the eigenbasis $\bU_{G}$ of the corresponding graph, and the resulting distribution of these projections defines $\widehat{\mu}$.  
% The inverse \gls{GDS-FT} is given by $\mu = (\Phi^{-1})_{\#}\widehat{\mu}$.  

% \subsubsection{Generalized GDS convolutional filtering}
% We now extend the GDS convolutional filtering defined in \cref{def.GDS_filters} to the signal-adaptive setting as follows.

\begin{Definition}[Generalized \gls{GDS} convolutional filter]\label{def.GGDS_filters}
Let $\bF_G$ be a graph convolutional filter (i.e., a polynomial of $\bS_{G}$) for each graph $G\in\calG_N$ and let 
\begin{align}\label{eq.calJ}
\calJ : \Real^{N}\times\calG_{N}\to \Real^{N},\quad (\bx, G) \mapsto \bF_{G}\bx.
\end{align}
For a SAGS $\calA$, a \emph{generalized \gls{GDS} convolutional filter} is a mapping $\calT := \calT_{\calJ,\calA}$ defined by
\begin{align}\label{eq.calT}
\calT : \calP_p(\mathbb{R}^N) \to \calP_p(\mathbb{R}^N), \quad \mu \mapsto (\calJ_{*}\circ \calA^{\ast})\mu.
\end{align}
I.e., for any Borel set $B \in \calB(\bbR^N)$, the filter output yields
\begin{align*}
\calT(\mu)(B)
&=(\calA^{*}\mu) \left(\calJ^{-1}(B)\right) \\
&= (\calA^{*}\mu)\parens*{\set*{(\bx, G) \given \bF_G\bx \in B}}. 
\end{align*}
\noindent$\rhd\ \bx \sim \mu,\ G\mid\bx \sim \nu_{\bx},\ \calJ(\bx,G)=\bF_G\bx \sim \calT(\mu)$
\end{Definition}

Analogously to the Fourier transform, a generalized GDS convolutional filter maps an input GDS $\mu$ to the filtered GDS defined as the distribution of $\bY = \bF\bx$, where $(\bx, G)$ is drawn from the joint distribution $\calA^{\ast}\mu$. The filter thus propagates uncertainty in both the signal and the graph topology through the same local aggregation mechanism as classical graph filtering, applied pointwise to each realization of $(\bx, G)$. In particular, when the graph is deterministic, i.e., $\nu_\bx = \delta_{G}$, both \cref{def.GDS_FT,def.GDS_filters} are recovered as special cases.

\cref{tab:dic} presents a dictionary that maps core concepts of traditional GSP to their counterparts in the proposed generalized GDS framework. Using this dictionary, existing GSP models and formulations can be systematically translated into the generalized GDS setting.

\begin{table}[!htb]
\centering
\caption{Mapping between GSP and GDS} \label{tab:dic}
\centering
\begin{tabular}{rl} 
\toprule
Traditional GSP & GDS framework  \\ 
\midrule
Signal &\gls{GDS} (\cref{def.gds}): $\mu$ \\ 
Graph & SAGS (\cref{def.SAGS}): $\calA$  \\ 
GFT & Generalized GDS-FT (\cref{Def.GGDS_FT})   \\
Filter & Generalized GDS filter (\cref{def.GGDS_filters})\\
% \hline
% Convolution & GDS convolution (\cref{eq:gpm}): $\mathfrak{c}=(\calA,P\circ s)$ \\
%Total variation & See \cite{JiZhaLee25} \\
\bottomrule
\end{tabular}
\end{table}   

\subsection{Analytic Properties Of Generalized GDS Transforms}

In this subsection, we study the analytic properties of generalized GDS transforms, which provide the theoretical foundation for modeling GDSs with empirical or mixed Gaussian distributions. To establish the main results in a unified framework, we first introduce a general notion of GDS transformation under which both the generalized GDS Fourier transform and convolutional filters arise as special cases. Recall that $M_N(\Real)$ denotes the normed space of $N\times N$ real matrices with the operator norm.

\begin{Definition}\label{def:GGDS_transform}
Given a SAGS $\calA$ and a matrix-valued function $\bM_{(\cdot)} : G \in \calG_N \mapsto \bM_{G} \in M_N(\Real)$, the generalized GDS transformation $\calT$ associated with $(\calA, \bM_{(\cdot)})$ is as defined in \cref{def.GGDS_filters} with $\bF_G$ replaced by $\bM_{G}$, i.e.,
\begin{align}\label{eq.calJ-G}
\calJ : \Real^{N}\times\calG_{N}\to \Real^{N},\quad (\bx, G) \mapsto \bM_{G}\bx.
\end{align}
$\rhd\ \bx \sim \mu,\ G\mid\bx \sim \nu_{\bx},\ \calJ(\bx,G)=\bM_G\bx \sim \calT(\mu)$
\end{Definition}

For example, the GDS-FT is associated with $\bM_{G} = \bU_{G}\T$, while a generalized GDS convolutional filter corresponds to $\bM_{G} = \bF_{G}$. In both cases, the GDS transformation is defined by the pushforward of the joint distribution $\calA^{*}\mu$ through the map $\calJ: \Real^{N}\times\calG_{N}\to \Real^{N}$, which maps each graph $G$ to its associated matrix $\bM_{G}$ and then applies $\bM_{G}$ to $\bx$.

\begin{Definition}[Induced transform distribution]\label{def:ifd}
Consider a generalized GDS transformation $\calT$ associated with $(\calA, \bM_{(\cdot)})$. 
Let 
\begin{align}\label{eq.ind_trans}
\calM : \Real^N \to \calP_{p}(M_N(\Real))
\end{align}
be the map that assigns to each $\bx \in \Real^N$ the distribution of the matrix $\bM_{G}$, where $G$ is drawn from the SAGS $\nu_{\bx}(\cdot) = \calA(\bx, \cdot)$. Formally, $\calM(\bx)$ is the pushforward of $\nu_{\bx} \in \calP_{p}(\calG_N)$ through $G \mapsto \bM_{G}$. We call $\calM(\bx)$ the \emph{induced transform distribution} of $\bx$ under the generalized GDS transformation $\calT$.

\noindent$\rhd\ \bx \sim \mu,\ G\mid\bx \sim \nu_{\bx},\ \bM_G \sim \calM(\bx)$
\end{Definition}

We can now state the key continuity properties of the generalized GDS transforms $\calT$ given appropriate conditions in $\calM$. For easy reference, the maps introduced in this section are summarized in \cref{tab:aso}.

\begin{table}[!tb]
\centering
\caption{A summary of the main maps. In the last column, $\frakL(\cdot)$ is the probability distribution of its argument.} \label{tab:aso}
\centering
\begin{tabular}{cccc} 
Operator & Mapping & Eqn. & R.v.\\
\toprule
$\calA^*$ & $\calP_{p}(\mathbb{R}^N) \to \calP_{p}(\mathbb{R}^N\times \calG_N)$ & \cref{eq:cmi} & $\frakL(\bx) \mapsto \frakL(\bx,G)$ \\ 
\midrule
$\calJ$ & $\Real^N \times \calG_N \to \Real^N$ & \cref{eq.calJ-G} & $(\bx, G) \mapsto \bM_G\bx$\\
\midrule
$\calT = \calJ_*\circ \calA^*$ & $ \calP_{p}(\Real^{N}) \to \calP_{p}(\Real^{N})$ & \cref{eq.calT} & $\frakL(\bx) \mapsto \frakL(\bM_G\bx)$\\
\midrule
$\calM$ & $\Real^N \to \calP_{p}(M_N(\Real))$ & \cref{eq.ind_trans} & $\bx \mapsto \frakL(\bM_G)$ \\
\bottomrule
\end{tabular}
\end{table}  

%The following two theorems establish that the GDS transform depends continuously on the input signal distribution under mild regularity assumptions on $\Phi$. \Cref{thm:uc} considers signals supported on a compact set and requires only continuity of $\Phi$, yielding uniform continuity of $\calT_{\calG}$ over all distributions on the compact set; this setting is natural in applications where signal values are bounded, such as image pixels in $[0,255]^{N}$ or sensor readings with physical constraints. \Cref{thm:lip} removes the compactness assumption and works on the full space $\Real^{N}$, at the cost of requiring Lipschitz continuity of $\Phi$ and a finite $6$-th moment condition on the input distribution, establishing pointwise continuity of $\calT_{\calG}$.

\begin{Theorem}[Uniform continuity on compact sets]
\label{thm:uc}
Let $M_{N}(\Real)$ be endowed with the operator norm. Suppose $C \subset \Real^{N}$ is compact, and the map $\calM: \Real^N \to \calP_{p}(M_{N}(\Real))$ in \cref{def:ifd} is continuous on $C$. Then the generalized \gls{GDS} transform $\calT : \calP_{p}(C) \to \calP_{p}(\Real^{N})$ in \cref{def:GGDS_transform} is uniformly continuous.
\end{Theorem}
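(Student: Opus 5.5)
Since $\calP_p(C)$ with $W_p$ is compact (Prokhorov plus uniformly bounded moments on the compact set $C$), it suffices to show that $\calT$ is continuous on $\calP_p(C)$; a continuous map on a compact metric space is uniformly continuous. So I will fix $\mu\in\calP_p(C)$ and a sequence $\mu_n\to\mu$ in $W_p$, and show $W_p(\calT(\mu_n),\calT(\mu))\to 0$.

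The key construction is a coupling. Let $\gamma_n\in\Gamma(\mu_n,\mu)$ be an optimal coupling for $W_p$. For each pair $(\bx,\by)\in C\times C$, choose an optimal coupling $\rho_{\bx,\by}\in\Gamma(\calM(\bx),\calM(\by))$ on $M_N(\Real)\times M_N(\Real)$. This choice must be measurable in $(\bx,\by)$, which is available from the measurable selection of optimal plans \cite[Cor.~5.22]{Vil09}. The glued measure
\begin{align*}
\pi_n(\ud\bx,\ud\by,\ud\bM,\ud\bM') = \gamma_n(\ud\bx,\ud\by)\,\rho_{\bx,\by}(\ud\bM,\ud\bM')
\end{align*}
has the property that the law of $\bM\bx$ is $\calT(\mu_n)$ and the law of $\bM'\by$ is $\calT(\mu)$. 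Here I use that $\calT(\mu)$ is the law of $\bM_G\bx$ with $\bx\sim\mu$ and $\bM_G\sim\calM(\bx)$, since $\calM(\bx)$ is the pushforward of $\nu_\bx$. Therefore $W_p^p(\calT(\mu_n),\calT(\mu))\le\int\|\bM\bx-\bM'\by\|^p\,\ud\pi_n$. I then split
\begin{align*}
\|\bM\bx-\bM'\by\|\le\|\bM\|_{\mathrm{op}}\|\bx-\by\|+\|\bM-\bM'\|_{\mathrm{op}}\|\by\|,
\end{align*}
and use $\|\by\|\le R:=\max_C\|\cdot\|$ together with $(a+b)^p\le 2^{p-1}(a^p+b^p)$.

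The second term is bounded by $R^p\int W_p^p(\calM(\bx),\calM(\by))\,\ud\gamma_n$. The map $(\bx,\by)\mapsto W_p(\calM(\bx),\calM(\by))$ is continuous on $C\times C$, hence uniformly continuous and bounded, and it vanishes on the diagonal. So for every $\varepsilon$ there is $\delta$ such that its integral is at most $\varepsilon^p+K\gamma_n(\|\bx-\by\|\ge\delta)$, and Markov's inequality gives $\gamma_n(\|\bx-\by\|\ge\delta)\le W_p^p(\mu_n,\mu)/\delta^p\to0$.

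The first term is the main obstacle, because $\|\bM\|_{\mathrm{op}}$ is random and may be unbounded, while $\|\bx-\by\|$ is only small in $L^p(\gamma_n)$. Continuity of $\calM$ on the compact set $C$ gives $\sup_{\bx\in C}\int\|\bM\|^p\,\ud\calM(\bx)=:K_p<\infty$, so conditioning on $(\bx,\by)$ yields $\int\|\bM\|^p\|\bx-\by\|^p\,\ud\pi_n\le K_p\int\|\bx-\by\|^p\,\ud\gamma_n=K_pW_p^p(\mu_n,\mu)\to0$. This works because $\bx,\by$ are fixed under the inner integral, so no Hölder step and no higher moments are needed. Combining the two bounds gives continuity, and compactness upgrades it to uniform continuity. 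If one prefers to avoid the compactness step, the same estimates give an explicit modulus $\omega(t)$ depending only on $C$ and $\calM$, with the Markov step providing the quantitative dependence on $t=W_p(\mu_n,\mu)$.
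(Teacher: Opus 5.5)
Your proposal is correct and follows essentially the same route as the paper's proof. Both glue an optimal coupling of the inputs with a measurably selected optimal coupling of $\calM(\bx)$ and $\calM(\by)$ (via \cite[Cor.~5.22]{Vil09}), use the same two-term split controlled by $R=\max_C\|\bx\|$ and $\sup_{C}W_p(\calM(\bx),\delta_{\bm{0}})$, and handle the $W_p(\calM(\bx),\calM(\by))$ term by splitting at $\|\bx-\by\|<\varepsilon$ and applying Markov's inequality. The only difference is cosmetic: you conclude via sequential continuity plus compactness of $\calP_p(C)$, whereas the paper reads off the explicit uniform modulus directly, which you also note is available.
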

\begin{IEEEproof}
See \cref{proof_thm_uc} in the Appendix.
\end{IEEEproof}

In many practical settings, it is natural to assume that signals lie in a compact subset of $\Real^N$. Under this assumption, continuity of $\calM$ is a mild requirement. Specifically, the domain of $\calM$ is $C$, whereas the domain of $\calT$ is $\calP_{2}(C)$. Since $\calM$ is an intermediate object in the construction of $\calT$, establishing its continuity is generally easier. A representative example is the constant SAGS case \cite{JiTayOrt23}, where $\nu_{\bx}=\nu$ is independent of $\bx$; consequently, $\calM(\bx)=(\bM_{(\cdot)})_{*}\nu$ is constant in $\bx$ and therefore trivially continuous.

\begin{Theorem}[Pointwise continuity under Lipschitz condition]
\label{thm:lip}
Suppose the map $\calM: \Real^{N} \to \calP_{p}(M_{N}(\mathbb{R}))$ in \cref{def:ifd} is Lipschitz continuous. %, i.e., there exist $L>0$ such that for all $\bx_{1},\bx_{2} \in \Real^{N}$,
%\begin{align}\label{eq:lip}
%    W_{2}((\bT_{(\cdot)})_{\#}\mu_{\calG}(\cdot|\bx_{1}),(\bT_{(\cdot)})_{\#}\mu_{\calG}(\cdot|\bx_{2}))\leq L\|\bx_{1}-\bx_{2}\|.
%\end{align}
Then, the generalized GDS transform $\calT : \calP_{2p}(\Real^{N}) \to \calP_{p}(\Real^{N})$ in \cref{def:GGDS_transform} is continuous at any $\mu \in \calP_{2p}(\Real^{N})$ (from the $W_{2p}$ topology in the domain to the $W_p$ topology in the codomain).
\end{Theorem}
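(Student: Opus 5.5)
The idea is to build an explicit coupling between $\calT(\mu)$ and $\calT(\mu')$ and bound its cost. The signal part will be handled by the $W_{2p}$ coupling of $\mu$ and $\mu'$, and the graph part by the Lipschitz condition on $\calM$.

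Fix $\mu\in\calP_{2p}(\Real^N)$ and let $\mu'\in\calP_{2p}(\Real^N)$ with $W_{2p}(\mu,\mu')\to 0$. Let $\gamma$ be an optimal coupling of $(\mu,\mu')$ for $W_{2p}$. Write $L$ for the Lipschitz constant of $\calM$, so that $W_p(\calM(\bx),\calM(\bx'))\le L\norm{\bx-\bx'}$. For each pair $(\bx,\bx')$, choose an optimal coupling $\pi_{\bx,\bx'}$ of $\calM(\bx)$ and $\calM(\bx')$ on $M_N(\Real)\times M_N(\Real)$. A measurable selection of optimal plans exists, as in \cite[Cor.~5.22]{Vil09}, because $\calM$ is continuous.

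Sample $(\bx,\bx')\sim\gamma$ and then $(\bM,\bM')\sim\pi_{\bx,\bx'}$. The law of $(\bM\bx,\bM'\bx')$ is then a coupling of $\calT(\mu)$ and $\calT(\mu')$. The marginal $\bM\bx$ has the correct law because $\calT(\mu)$ only depends on the joint law of $(\bx,\bM_G)$, in which $\bM_G\mid\bx\sim\calM(\bx)$. We split
\[
\bM\bx-\bM'\bx'=(\bM-\bM')\bx+\bM'(\bx-\bx'),
\]
and use Minkowski's inequality in $L^p$:
\[
W_p(\calT\mu,\calT\mu')\le \big(\E\|\bM-\bM'\|_{\mathrm{op}}^p\|\bx\|^p\big)^{1/p}+\big(\E\|\bM'\|_{\mathrm{op}}^p\|\bx-\bx'\|^p\big)^{1/p}.
\]
Cauchy--Schwarz splits each term into a $2p$-moment of one factor times a $2p$-moment of the other. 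This is why the domain carries the $W_{2p}$ topology. For instance, $\E\|\bx-\bx'\|^{2p}=W_{2p}(\mu,\mu')^{2p}$, and $\E\|\bx\|^{2p}<\infty$.

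The main obstacle is that conditional on $(\bx,\bx')$ we only control $p$-th moments of $\bM-\bM'$, namely $\E[\|\bM-\bM'\|^p\mid \bx,\bx']\le L^p\|\bx-\bx'\|^p$, not $2p$-th moments. We should therefore avoid Cauchy--Schwarz on the first term and condition first:
\[
\E\|\bM-\bM'\|^p\|\bx\|^p=\E_\gamma\big[\|\bx\|^p W_p^p(\calM(\bx),\calM(\bx'))\big]\le L^p\,\E_\gamma\|\bx\|^p\|\bx-\bx'\|^p .
\]
Cauchy--Schwarz now gives at most $L^p (\int\|\bx\|^{2p}d\mu)^{1/2}W_{2p}(\mu,\mu')^p$, which tends to $0$. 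For the second term we condition similarly:
\[
\E[\|\bM'\|^p\mid\bx']=W_p^p(\calM(\bx'),\delta_{\mathbf 0})\le 2^{p-1}\big(W_p^p(\calM(\mathbf 0),\delta_{\mathbf 0})+L^p\|\bx'\|^p\big).
\]
The term then becomes $\E_\gamma[(c_0+c_1\|\bx'\|^p)\|\bx-\bx'\|^p]$, and Cauchy--Schwarz bounds it by $(c_0+c_1(\int\|\bx'\|^{2p}d\mu')^{1/2})W_{2p}(\mu,\mu')^p$. The $2p$-moments of $\mu'$ stay uniformly bounded near $\mu$ by the triangle inequality for $W_{2p}$ and $\delta_{\mathbf 0}$. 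Both terms therefore vanish as $W_{2p}(\mu,\mu')\to0$, which gives continuity at $\mu$.

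Along the way we would also check that $\calT(\mu)\in\calP_p(\Real^N)$. This follows from the same moment estimate with $\bx'=\mathbf 0$ and is consistent with \cref{Lemma_pullback}.
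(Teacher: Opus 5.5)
Your proposal is correct and follows essentially the same route as the paper. Both arguments take an optimal $W_{2p}$ coupling of $\mu,\mu'$, choose a measurable selection of optimal couplings of $\calM(\bx),\calM(\bx')$, push it forward by $(\bM,\bM')\mapsto(\bM\bx,\bM'\bx')$, split as $(\bM-\bM')\bx+\bM'(\bx-\bx')$, apply the Lipschitz bound and the growth bound $W_p(\calM(\bx),\delta_{\mathbf 0})\le L\|\bx\|+S_0$ fiberwise (which is your conditioning step), and finish with Cauchy--Schwarz plus uniform $2p$-moment bounds on a $W_{2p}$-ball. The only cosmetic difference is that you combine the two terms with Minkowski's inequality, where the paper uses $\|a+b\|^p\le 2^{p-1}(\|a\|^p+\|b\|^p)$.
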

%, i.e., $\int \|\bx\|^{6} \ud \mu(\bx) < \infty$. Concretely, for all such $\mu$ and all $\varepsilon>0$, there exists $\delta>0$ (depending on $\varepsilon$ and $\mu$) such that for all $\mu'\in \calP_{p}(\Real^{N})$,
%\begin{align}
%    W_2(\mu,\mu') < \delta \implies W_{2}(\calT_{\calG}(\mu),\calT_{\calG}(\mu')) < \varepsilon.
%\end{align}

\begin{IEEEproof}
See \cref{proof_thm:lip} in the Appendix.
\end{IEEEproof}

Both finite atomic measures \cite[Theorem~6.18]{Vil09} and \glspl{GMM} \cite{DasGupta2008} are dense in $\calP_{p}(\Real^{N})$ under $W_{p}$ for every $1\leq p < \infty$. Therefore, continuity of $\calT$ in \cref{thm:uc,thm:lip} implies that its behavior on empirical measures (constructed from finitely many samples) or \glspl{GMM} closely approximates its behavior on the true underlying distribution. Consequently, for practical applications (e.g., \cref{sec:anomaly}), it is sufficient to work with empirical measures or \glspl{GMM}, which are tractable both conceptually and computationally.

Moreover, these continuity results provide a stability guarantee for GDS transforms: small perturbations of the input signal distribution arising from measurement noise, distribution shift, or finite-sample approximation, lead to controlled perturbations of the output. On compact signal domains, this stability holds with respect to $W_p$. In the noncompact case, \cref{thm:lip} establishes continuity from $W_{2p}$ to $W_{p}$, reflecting the need to control higher-order tails when the graph-dependent transform can grow with the signal magnitude. This condition covers important distribution classes, including compactly supported distributions and \glspl{GMM}.

% We also note that the finite sixth-moment condition in \cref{thm:lip} may be further relaxed; nevertheless, it already covers important cases, including compactly supported distributions and \glspl{GMM}.

\subsection{Example Application: Anomaly Detection}\label{sec:anomaly}

In classical GSP, anomaly detection applies a high-pass graph filter and flags anomalies when high-frequency GFT coefficients exceed a threshold calibrated from normal data \cite{San14b}. However, this approach disregards the distributional structure of high-frequency components, rendering it sensitive to noise and susceptible to false alarms under distribution shifts. To address this limitation, we propose a generalized GDS-based formulation that characterizes the empirical distribution of high-frequency features and detects anomalies by comparing the test distribution against a reference distribution estimated from normal data.

Consider a binary anomaly detection problem with class set $\calC=\set{\text{normal}, \text{abnormal}}$. For each class $c\in \calC$, let $\{\bx_{m}^{(c)}, G_{m}^{(c)}\}_{m=1}^{M}$ denote \gls{iid} signal-graph observations with $\bx_{m}^{(c)} \sim \mu^{(c)}$. We assume the SAGS $\calA$ is common to both classes, so that the two classes share the same signal-to-graph mapping $\bx\mapsto\nu_{\bx}$. Consequently, differences between the class-conditional joint distributions $\calA^*\mu^{(\text{normal})}$ and $\calA^*\mu^{(\text{abnormal})}$ arise solely from $\mu^{(\text{normal})} \neq \mu^{(\text{abnormal})}$. 

To extract anomaly-sensitive features, we apply the GFT to map each signal to the spectral domain. Define $\Phi:\bbR^N\times \calG_N\to \bbR^N$ by $\Phi(\bx, G)=\bU_{G}\T \bx$, where $\bU_{G}$ denotes the graph Fourier eigenbasis of $G$ with respect to a GSO $\bS_G$. Because anomalies are expected to manifest primarily in high-frequency components, let $\calI=\set{i \given \lambda_{1} \leq i \leq \lambda_{2}}$ denote a selected high-frequency index set. For each $i\in\calI$, let $\pi_i:\bbR^N\to\bbR$, $\pi_i(\bv)=v_i$, denote the coordinate projection onto the $i$-th graph Fourier coefficient. For each class $c\in\calC$, the induced high-frequency characteristic distribution is defined as
\begin{align}
\label{eq.HF_extract}
\mu_{\mathrm{HF}}^{(c)}
:= \frac{1}{|\calI|} \sum_{i\in\calI}(\pi_i\circ\Phi)\calA^*\mu^{(c)}.
\end{align}
In practice, $\mu_{\mathrm{HF}}^{(c)}$ is unknown and is estimated from training data by fitting a one-dimensional \gls{GMM} to the collected high-frequency coefficient samples. This distribution captures the statistical signature of high-frequency spectral components for class $c$. Because anomalous signals are expected to induce stronger graph-irregular variations, normal and abnormal signals may yield distinct high-frequency coefficient distributions. Consequently, $\mu_{\mathrm{HF}}^{(c)}$ serves as a discriminative distributional feature for anomaly detection. The procedure consists of the following steps.

% Strictly, $\mu_{\mathrm{HF}}^{(c)}$ is supported on $\calI\times\bbR$, i.e., a discrete set of frequency indices each carrying a continuous coefficient distribution. For tractability, we embed $\calI$ into $\bbR$ and approximate $\mu_{\mathrm{HF}}^{(c)}$ by a GMM on $\bbR^2$, treating each feature pair $(i, \widehat{\bx}_i)$ as a two-dimensional point. Intuitively, $\mu_{\mathrm{HF}}^{(c)}$ captures the distributional signature of high-frequency spectral components under class $c$, i.e., normal and abnormal signals are expected to produce characteristically different distributions, making $\mu_{\mathrm{HF}}^{(c)}$ a discriminative feature for anomaly detection. In practice, $\mu_{\mathrm{HF}}^{(c)}$ is unknown and estimated from training data via a \gls{GMM}. The anomaly detection procedure consists of the following steps.

\paragraph{Step 1: Estimate reference distribution} Assume $\{\bx_m^{(\text{normal})}\}_{m=1}^{M}$ is split into three disjoint 
sets: $\calD_{\text{train}}$ for estimating the reference distribution, $\calD_{\text{cal}}$ for threshold calibration, and $\calD_{\text{test}}$ for evaluation. We estimate the reference distribution from $\calD_{\text{train}}$ as a $K$-component GMM, i.e.,
% \begin{align*}
% P_0 := \sum_{k=1}^{K} \pi_k \calN\!\left(\bm{m}_k^{(0)}, \bSigma_k^{(0)}\right) \in \calP(\calI \times \bbR)
% \end{align*}
\begin{align*}
\hat{\mu}^{\mathrm{ref}}_{\mathrm{HF}}:= \sum_{k=1}^{K} 
\alpha_k \calN\left(m_{k}^{(\mathrm{ref})}, \left(\sigma_{k}^{(\mathrm{ref})}\right)^{2}\right) \in \calP(\Real),
\end{align*}
where $\alpha_k$, $m_k^{(\mathrm{ref})}$ and $\sigma_k^{(\mathrm{ref})}$ are estimated from high-frequency coefficient samples $\{(\pi_i\circ
\Phi)(\bx_m, G_m) \mid \bx_m\in\calD_{\text{train}},\ G_m\sim \nu_{\bx_m},\ i\in\calI\}$.

\paragraph{Step 2: Calibrate detection threshold} Partition $\calD_{\text{cal}}$ into $N_B^{\text{cal}}$ batches $\{B_b^{\text{cal}}\}_{b=1}^{N_B^{\text{cal}}}$, each of size $|B_b^{\text{cal}}|=n$. For each batch, we estimate its 
high-frequency distribution as an $L$-component GMM, i.e.,
% \begin{align*}
%     Q_b^{\text{cal}} := \sum_{\ell=1}^{L} \pi_\ell' \calN\!\left(\bm{m}_\ell^{(b)}, \bSigma_\ell^{(b)}\right) \in \calP(\calI\times\bbR),
% \end{align*}
\begin{align*}
\hat{\mu}^{\mathrm{cal},b}_{\mathrm{HF}} := \sum_{\ell=1}^{L} \alpha_{\ell}' \calN\!\left(m_\ell^{(b)}, \left(\sigma_\ell^{(b)}\right)^{2} \right) \in \calP(\Real),
\end{align*}
where $\alpha_{l}$, $m_\ell^{(b)}$ and $\sigma_\ell^{(b)}$  are estimated from  $\{(\pi_i\circ\Phi)(\bx_m, G_m) \mid \bx_m\in B_b^{\text{cal}}, G_m\sim\nu_{\bx_m},\ i\in\calI\}$. The number of components $L$ is chosen independently of $K$, as the reference and per-batch GMMs are fitted separately. The calibration score is then computed as $d_b := d( \hat{\mu}^{\mathrm{ref}}_{\mathrm{HF}}, \hat{\mu}^{\mathrm{cal},b}_{\mathrm{HF}})$, where $d(\cdot,\cdot)$ measures the discrepancy between two GMMs: either via a simple surrogate statistic based on component means, or via a more principled distributional distance (e.g., mixture Wasserstein distance). Specifically, we consider the following two choices:
\begin{itemize}
\item Peak location distance:
\begin{align}
\begin{aligned}
\label{eq.peak_loc}
d( \hat{\mu}^{\mathrm{ref}}_{\mathrm{HF}}, \hat{\mu}^{\mathrm{cal},b}_{\mathrm{HF}}) &= \frac{1}{2} \Big(\frac{1}{L}\sum_{\ell=1}^{L}\min_{k}\|m_\ell^{(b)}-m_k^{(\mathrm{ref})}\|_{2}\\
&+ \frac{1}{K}\sum_{k=1}^{K}\min_{\ell}\|m_\ell^{(b)}-m_k^{(\mathrm{ref})}\|_{2}\Big).    
\end{aligned}
\end{align}

\item Mixture Wasserstein distance:
\begin{align}
\label{eq.MWD_detec}
d(\hat{\mu}^{\mathrm{ref}}_{\mathrm{HF}}, \hat{\mu}^{\mathrm{cal},b}_{\mathrm{HF}}) = MW_{2}(\hat{\mu}^{\mathrm{ref}}_{\mathrm{HF}}, \hat{\mu}^{\mathrm{cal},b}_{\mathrm{HF}})^{2}.
\end{align}
\end{itemize}
Finally, set the detection threshold as
\begin{align*}
\tau = \mathrm{Quantile}_{1-\alpha}\!\left(\{d_b\}_{b=1}^{N_B^{\text{cal}}}\right),
\end{align*}
with significance level $\alpha$.

\paragraph{Step 3: Detection rule}
Given a test batch, we estimate its high-frequency distribution $\hat{\mu}^{\mathrm{test}}_{\mathrm{HF}}$ as in Step 2, and classify it by comparing its discrepancy from the reference against the calibrated threshold $\tau$:
\begin{align*}
\hat{c} = \begin{cases} 
\text{normal}, & d(\hat{\mu}^{\mathrm{ref}}_{\mathrm{HF}}, 
\hat{\mu}^{\mathrm{test}}_{\mathrm{HF}}) \leq \tau, \\ 
\text{abnormal}, & d(\hat{\mu}^{\mathrm{ref}}_{\mathrm{HF}}, 
\hat{\mu}^{\mathrm{test}}_{\mathrm{HF}}) > \tau.
\end{cases}
\end{align*}

We apply the scheme on a real dataset in \cref{sec:ad}.

\section{Numerical Experiments} \label{sec:experiments}

\subsection{Graph Filter Learning}\label{sec:exper_G_learning}

\subsubsection{Dataset and Experiment setup}

We evaluate both \emph{GDS-Cop} and \emph{GDS-GMMs} on a COVID-19 dataset from The New York Times.\footnote{\url{https://github.com/nytimes/covid-19-data}} We use records for California’s $N=58$ counties from July 29, 2020, to August 1, 2022, splitting them into a training set $\calD_{\mathrm{train}}$ (July 29, 2020–July 30, 2021) and a test set $\calD_{\mathrm{test}}$ (July 31, 2021–August 1, 2022). Counties are nodes with edges defined by geographic adjacency. Both training and test sets are evenly partitioned into $\calS$ non-overlapping time windows, i.e., $\calD_{\mathrm{train}} = \{ f_{\mathrm{train}}(V, \calT_s)\}_{s=1}^{\calS}$, where $f_{\mathrm{train}}(V, \calT_s) $ denotes the number of cases reports across $V$ during the $s$-th time window $\calT_s$. Each window $\calT_{s}$ consists of $\calW$ consecutive days, so that the window size is given by $\calW=\abs{\calT_{s}}$.

For \emph{GDS-Cop}, during training, the target statistics, namely mean $\bm{m}^\star$ and covariance $\bm{\Sigma}^\star$, are estimated from the subsequent window $f_{\mathrm{train}}(V, \calT_{s+1})$ following the input window $f_{\mathrm{train}}(V, \calT_{s})$. The marginal mean $m_{i}$ and variance $\sigma_{i}$ for each node $v_i$ are computed from its local observations $f_{\mathrm{train}}(v_i, \calT_s)$. After optimizing $\widetilde{\bF}_{G}$ using \cref{eq.opt}, we evaluate it on the test set. For \emph{GDS-GMMs}, by contrast, we directly estimate a joint Gaussian mixture model from the multivariate observations over the entire graph within each time window. Specifically, given the samples $f_{\mathrm{train}}(V, \calT_{s})$, we fit a joint GMM to obtain the mixture weights $\{a_{k}\}_{k=1}^{K}$, component means $\{\bm{m}_{k}\}_{k=1}^{K}$ and covariances $\{\bSigma_{k}\}_{k=1}^{K}$, where $K$ is a hyperparameter controlling the number of Gaussian components. The target distribution is estimated analogously from the subsequent window $f_{\mathrm{train}}(V, \calT_{s+1})$, yielding the parameters $\{\bm{m}^{\star}_{l},\bSigma_{l}^{\star}, b_{l}\}_{l=1}^{L}$, where $L$ is a hyperparameter controlling the number of Gaussian components. These joint GMM parameters (i.e., $\{(\bm{m}_{k},\bm{\Sigma}_{k},a_{k})\}_{k=1}^{K}$ and $\{(\bm{m}_{l}^\star, \bm{\Sigma}_{l}^\star, b_{l})\}_{l=1}^{L}$) are then used to define objective in \cref{for.WM_2} for learning the graph filter. Once the graph filter $\bF_{G}$ is optimized, its performance is evaluated on the test set.

During testing, for each test window $s = 1, \dots, \calS - 1$, the prediction is computed as
% \begin{align*}
$\widehat{f}_{\mathrm{test}}(V, \calT_{s+1}) = \widetilde{\bF}_{G} f_{\mathrm{test}}(V, \calT_s)$,
% \end{align*}
and the prediction error is evaluated using the relative squared error (RSE), defined by
\begin{align*}
\text{RSE}^{(s)}=\frac{\norm{\widehat{f}_{\mathrm{test}}(V, \calT_{s+1})-f_{\mathrm{test}}(V, \calT_{s})}_{F}^{2}}{\norm{f_{\mathrm{test}}(V, \calT_{s})}_{F}^{2}}. 
\end{align*}
The final performance is measured by the mean of RSE across all prediction windows, defined as $\text{MRSE} = \frac{1}{\calS-1}\sum_{s=1}^{\calS-1} \text{RSE}^{(s)}$.

\begin{figure*}[!t]
\centering
\includegraphics[width=\textwidth]{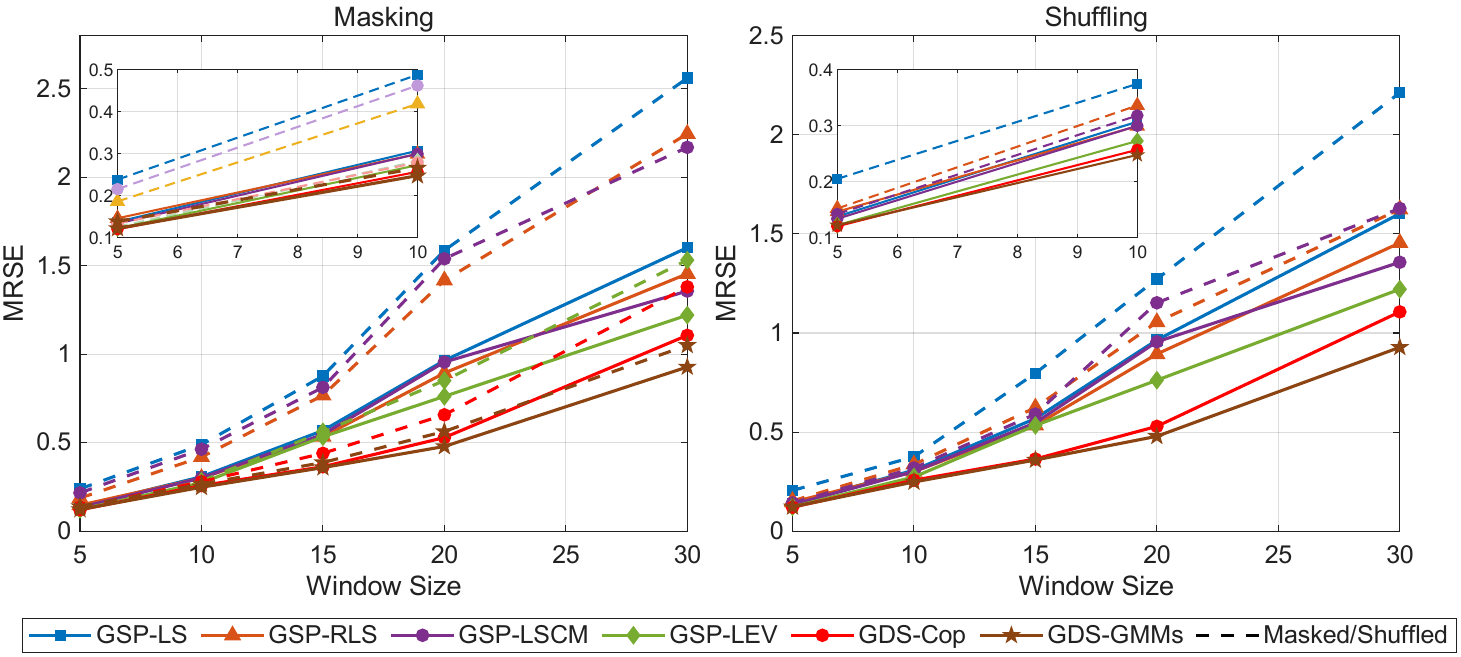}
% \vspace{-2mm}
\makebox[0.88\textwidth][c]{%
\begin{minipage}{0.3\textwidth}
\centering
\phantomsubcaption
\label{fig:masking}
{\small (a) Non-masking vs. Masking}
\end{minipage}
\hspace{0.07\textwidth}
\begin{minipage}{0.3\textwidth}
\centering
\phantomsubcaption
\label{fig:shuffling}
{\small (b) Non-shuffling vs. Shuffling}
\end{minipage}
}

\caption{MRSE as a function of window size under masking and shuffling conditions. Solid lines represent the baseline setting, while dashed lines represent the masked and shuffled variants. Each method uses the same color for corresponding dashed and solid curves. Note that GDS-based methods are invariant to shuffling.}
\label{fig:covid_RSE}
\end{figure*}

\subsubsection{Performance evaluation}
We compare both \emph{GDS-Cop} and \emph{GDS-GMMs} against the following graph filtering learning methods for graph signal prediction:
\begin{enumerate}[label=\alph*)]
\item \textbf{GSP-LS} (Least-Squares Graph Filtering) \cite{Shu13, San13}: The graph filter $\bF_{G}$ is learned by minimizing the Frobenius norm of the difference between the filtered signals and the target signals, i.e.,
$\min_{\bF_{G}} \norm{\bF_{G} \bX - \bX^\star}_F^2$.
\item \textbf{GSP-RLS} (Regularized Least-Squares Graph Filtering) \cite{Ramirez2021,Isufi2024}: The graph filter $\bF_{G}$ is learned by minimizing the same Frobenius norm objective as in GSP-LS, but with an additional $\ell_1$-norm regularization term to promote sparsity in the filter coefficients, i.e., 
% \begin{align*}
$\min_{\bF_{G}} \norm{\bF_{G} \bX - \bX^\star}_F^2 + \lambda \norm{\bF_{G}}_1$,
% \end{align*}
where $\lambda > 0$ controls the sparsity level.
\item \textbf{GSP-LSCM} (Least-Squares and Covariance-Matching Graph Filtering): This method extends GSP-LS by introducing a covariance-matching term into the objective. The graph filter $\bF_{G}$ is learned by solving:
\begin{align*}
\min_{\bF_{G}} \norm{\bF_{G} \bX - \bX^\star}_F^2 + \lambda \norm{\bF_{G} \bm{\Sigma}_{\bX} \bF_{G}\T - \bm{\Sigma}_{\bX^\star}}_F^2,
\end{align*}
where $\bm{\Sigma}_{\bX}$ and $\bm{\Sigma}_{\bX^\star}$ are the empirical covariances of the input and target signals, respectively. %and $\lambda >0$ is a trade-off parameter that balances the two terms.
\item \textbf{GSP-LEV} (Log-Evidence Maximization with Heat Kernel Mixture) \cite{Kwak2021}: The graph filter is modeled as a convex combination of $K$ heat diffusion kernels, i.e., $\bH_{G}(\calT) = \sum_{\tau \in \calT} \pi^{(\tau)} \bH_{G}(\tau)$ with $\bH_{G}(\tau) = e^{-\tau \bL_{G}}$, where $\calT$ is a predefined set of diffusion scales and $\pi^{(\tau)} \geq 0$, $\sum_{\tau\in\calT} \pi^{(\tau)} = 1$. The parameters $\pi_{\tau}$, $\alpha$ and $\gamma$, are learned by maximizing the log-evidence: 
% \red{[$\calN$ was used to denote distribution. Here, it is some function?]}
\begin{align*}
\max_{\pi, \alpha, \gamma} \log \mathcal{N}\left( \bX^{\star} \mid \bH_{G}(\calT)\bX,\, \alpha^{-1} \bI + \gamma^{-1} \bX \bX\T \right).
\end{align*}
% \cyan{where $\calN(\bm{\mu},\bSigma)$ denotes the probability density of a multivariate Gaussian distribution with mean $\bm{\mu}$ and covariance $\bSigma$.}
\end{enumerate}
For a fair comparison, the graph filter $\bF_{G}$ is parameterized as a 2-order Chebyshev polynomial with three filter coefficients, i.e., $\bF_{G}=\sum_{k=0}^{2} \btheta_{k} T(\bL_{G})$, and this parameterization is used consistently across all methods, including GSP-LS, GSP-RLS, GSP-LSCM, and the proposed GDS-Cop and GDS-GMMs.

To verify that the GDS framework does not rely on complete observations or strict temporal correspondence, we consider two stress-test settings: \emph{masking} and \emph{shuffling}. In the masking setting, each graph signal (column) is partially observed through a Bernoulli mask, where the observation probability is randomly drawn from $[0.6,0.9]$. Results are average over $10$ random masks. As shown in \cref{fig:masking}, GSP-LS, GSP-RLS, and GSP-LSCM degrade markedly due to their reliance on the complete observation assumption, while GSP-LEV, \emph{GDS-Cop} and \emph{GDS-GMMs} remain relatively robust since their distribution-matching formulations do not require full observations. In the shuffling setting, training data within each window is randomly permuted and results in \cref{fig:shuffling} are averaged over $10$ runs. From \cref{fig:shuffling}, GSP-LS, GSP-RLS, and GSP-LSCM suffer worse accuracy because they depend on strict temporal alignment, whereas GSP-LEV, \emph{GDS-Cop} and \emph{GDS-GMMs} remain stable. This resilience highlights the advantage of the GDS framework in real-world scenarios with incomplete or misaligned data, since it models distributions of graph signals rather than individual temporally aligned observations.

This advantage is also evident in the window-size study in \cref{fig:covid_RSE}. Both \emph{GDS-Cop} and \emph{GDS-GMMs} achieve strong performance across all window sizes, with one of them attaining the lowest MRSE in each case. Their benefit becomes more pronounced for larger windows. In particular, \emph{GDS-GMMs} tends to perform better when the window is large, likely because the increased number of samples allows more accurate estimation of the joint GMM. For smaller windows, \emph{GDS-Cop} is slightly more reliable, suggesting greater robustness in low-sample regimes. 

\subsubsection{When is GSP approaches better?}
We consider very small window sizes, namely $\calW=2,3,4$, with results reported in \cref{tab:MRSE_smallW}. This regime is challenging for GDS-based methods because only a few samples are available within each window to estimate the underlying distribution, leading to less reliable GMM and copula estimates. Consequently, the performance advantage of GDS-based methods diminishes, and they underperform the GSP-based baselines in this small-sample regime. These results reveal a fundamental limitation of the proposed approach: its effectiveness depends critically on having a sufficient number of observations within each window to support reliable distribution estimation.

\begin{table}[!htb]
    \centering
    \caption{MRSE under small window sizes}
    \resizebox{0.5\textwidth}{!}{%
    \begin{tabular}{c|ccc|cc}
    \toprule
     Window sizes & GSP-LS & GSP-RLS & GSP-LSCM &  GDS-Cop & GDS-GMM\\
    \midrule
      2           & 0.0487 & 0.0486 &   0.0412  &  0.0756  & 0.0943 \\
      3           & 0.0766 & 0.0762 &   0.0753  &  0.0959  & 0.1049\\
      4           & 0.1065 & 0.1061 &   0.1040  &  0.1181  & 0.1095  \\
    \bottomrule
    \end{tabular}
    }
    \label{tab:MRSE_smallW}
\end{table}

\subsection{Anomaly Detection} \label{sec:ad}
\subsubsection{Dataset and experimental setup} We conduct anomaly detection experiments on a brain ECoG dataset,\footnote{\url{https://math.bu.edu/people/kolaczyk/datasets.html}} consisting of normalized time-series recordings from $76$ brain electrodes with $4000$ time samples each, collected from an epilepsy patient during two seizure-related periods: pre-ictal (normal) and ictal (abnormal).

For each class, the ECoG time series is partitioned into non-overlapping segments of length $10$, each modeled as a graph signal on a joint sensor-time graph $G = G_0 \times H$, where $H$ is a path graph with $10$ nodes (temporal dimension) and $G_0$ is a sensor graph over the $76$ electrodes (spatial dimension). Thus $G$ has $760$ nodes indexed by $(s,t)$, yielding $400$ graph signals $\bx\in\Real^{760}$ per class, with joint Laplacian $\bL_{G}=\bL_{H}\otimes\bI_{76}+\bI_{10}\otimes\bL_{G_0}$. 

The sensor graph $G_0\sim\nu_{\bx}$ is constructed from the inter-electrode absolute correlation matrix. Given a batch of $n$ signals, each signal is reshaped into a $76\times10$ electrode-time matrix, and the resulting $10n$ temporal samples are used to estimate the $76\times76$ pairwise correlations between electrodes. The graph $G_0$ is then obtained by thresholding this matrix at a random threshold $\rho\sim\calN(\rho_0,\sigma_\rho^2)$ with $\rho_0=0.375$ and $\sigma_\rho^2=0.01$. As a practical instantiation of the SAGS, all signals within a batch share the same correlation structure, so $\nu_{\bx}$ is constant within a batch, and its randomness arises from the stochastic threshold $\rho$. The signal-adaptive nature is reflected across batches: different batches correspond to different signal realizations, inducing different graph distributions, which is precisely what distinguishes normal from abnormal conditions in the spectral domain.

For each class, the $400$ samples are split into $\calD_{\text{train}}$, $\calD_{\text{cal}}$, and $\calD_{\text{test}}$ with ratio $2{:}1{:}1$. We compute the GFT with respect to $\bL_G$ and extract the high-frequency components in the spectral range $[\lambda_1,\lambda_2]=[730,760]$, then aggregate them to estimate the class-specific reference distributions in \cref{eq.HF_extract}. \cref{fig.peak_loc} visualizes the $4$-component Gaussian mixture fits of the empirical high-frequency feature distributions for the two classes, which exhibit different peak locations in the frequency-feature space, motivating the peak location distance \cref{eq.peak_loc} as a discrepancy measure. During testing, each Monte Carlo run forms one test batch of $n$ signals drawn from a single class, with normal and abnormal batches following a $2{:}1$ ratio to reflect the class imbalance reported by NIHR, UK \cite{ratio_seizure}.

\begin{figure}[!ht]
\centering
\includegraphics[width=0.8\columnwidth, trim={3cm 9cm 3cm 9cm}, clip]{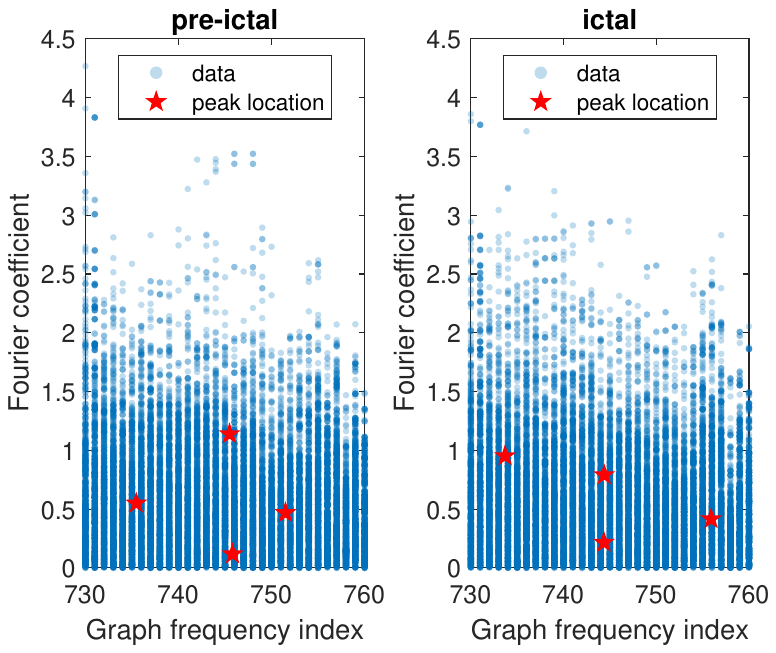}
\caption{Examples of Gaussian mixture fits of empirical distributions in the high-frequency domain. The peak locations can be used for detection.}
\label{fig.peak_loc}
\end{figure}

\begin{figure}[!ht]
\centering
\includegraphics[width=0.9\columnwidth, trim={1cm 8cm 2cm 8cm}, clip]{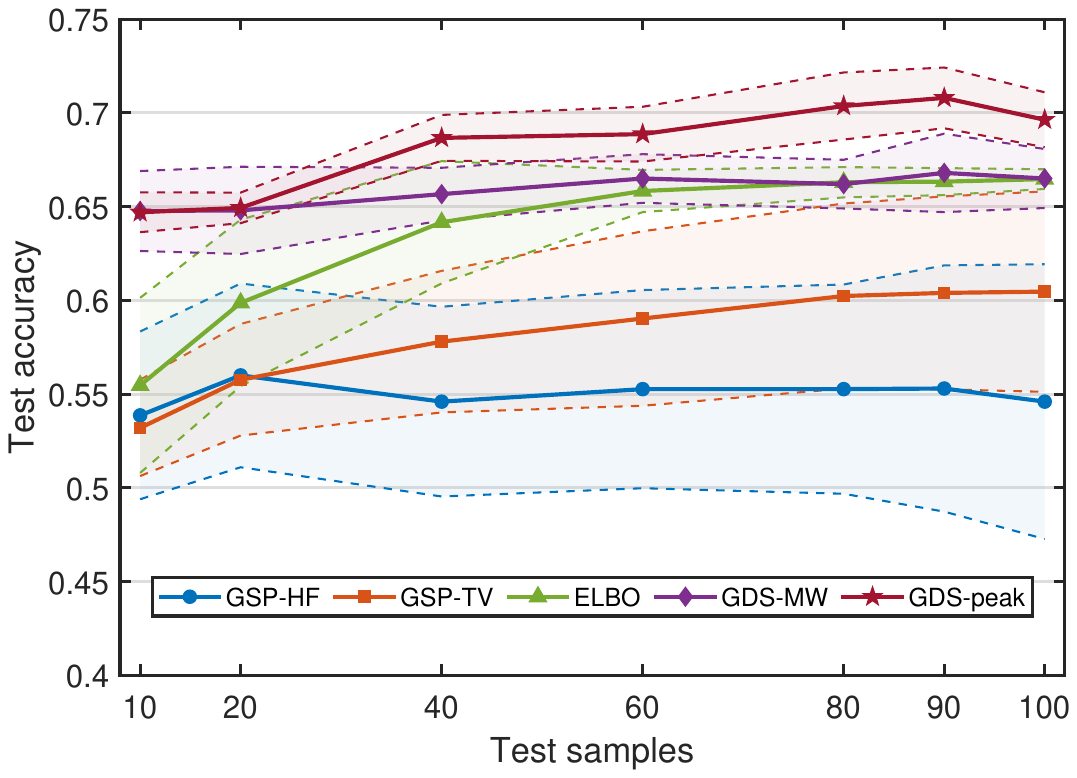}
\caption{Accuracy of anomaly detection w.r.t different sizes of test batch}
\label{fig.test_acc}
\end{figure}

\subsubsection{Performance evaluation} We compare the proposed GDS-based approach, instantiated with two discrepancy measures, namely GDS-Peak \cref{eq.peak_loc} and GDS-MW \cref{eq.MWD_detec}, against the following baseline methods for graph signal anomaly detection:

\begin{enumerate}[label=\alph*)]
\item \textbf{GSP-GFT} \cite{San14b}: The classical GFT-based detector uses the batch-averaged maximum high-frequency 
coefficient magnitude $S_b = \frac{1}{n}\sum_{\bx_m\in B_b}\max_{i\in\calI}|(\bU_{G_m}\T\bx_m)_i|$ as the test statistic, with threshold set as the average statistic over $\calD_{\text{train}}\cup\calD_{\text{cal}}$, and a test batch is declared abnormal if $S_b$ exceeds it.

\item \textbf{GSP-TV} \cite{Egilmez2014SpectralAD,SihengChen2016}: The total-variation (TV) based detector exploits the smoothness of 
normal graph signals via the batch-averaged normalized Laplacian quadratic form $\mathrm{TV}(B_b) = \frac{1}{n}\sum_{\bx_m\in B_b}
\bx_m\T\bL_{G_{m}}\bx_m/\|\bx_m\|_2^2$. Batches from $\calD_{\text{train}}\cup\calD_{\text{cal}}$ provide the median TV $m_0$ and the quantile threshold $\tau$; a test batch is declared abnormal if $|\mathrm{TV}(B_b)-m_0| > \tau$.

% \item \textbf{GSP-TV} \cite{Egilmez2014SpectralAD}: The classical total-variation (TV) based anomaly detector exploits the smoothness of normal graph signals, measured via the graph Laplacian quadratic form, as the detection statistic. The training set $\calD_{\text{train}}^{(\text{normal})}$ is partitioned into $M$ batches $\mathcal{B}_{\text{train}} = \{B_m\}_{m=1}^{M}$, where each batch $B_m = \{x_i\}_{i=1}^{\mathcal{M}_{\text{tr}}/10}$ contains $\mathcal{M}_{\text{tr}}/10$ normal samples. The per-batch TV statistic is computed as
% \begin{align*}
%     \mathrm{TV}(B_m) = \frac{1}{\mathcal{M}_{\text{tr}}/10} \sum_{i=1}^{\mathcal{M}_{\text{tr}}/10} \frac{x_i\T L_G x_i}{\|x_i\|_2^2}.
% \end{align*}
% The detection threshold is determined from the empirical distribution of $\{\mathrm{TV}(B_m)\}_{m=1}^{M}$ as
% \begin{align*}
%     \tau = \mathrm{Quantile}_{1-\alpha}\!\left(|\mathrm{TV}(B_m) - m_0|\right),
% \end{align*}
% where $m_0 = \mathrm{median}(\mathrm{TV}(B_1), \ldots, \mathrm{TV}(B_M))$ is the median TV of the training batches. Given a test batch $B_b$, the decision rule is
% \begin{align*}
%     \hat{c} = \begin{cases} \text{abnormal}, & |\mathrm{TV}(B_b) - m_0| > \tau, \\ \text{normal}, & |\mathrm{TV}(B_b) - m_0| \leq \tau. \end{cases}
% \end{align*}

\item \textbf{ELBO} \cite{an2015variational,kingma2013auto}: A variational autoencoder (VAE) \cite{kingma2013auto} is trained on 
the high-frequency GFT coefficients of normal samples from $\calD_{\text{train}}$, using the batch-averaged negative ELBO as the test statistic. The threshold $\tau$ is calibrated on $\calD_{\text{cal}}$ via the $(1-\alpha)$-quantile, and a test batch is declared abnormal if its negative ELBO exceeds $\tau$. 
\end{enumerate}

For all methods, detection performance is evaluated by the accuracy over $N_{\text{test}}$ test batches,
\begin{align}
\label{eq.acc}
\mathrm{Acc} = \frac{1}{N_{\text{test}}}\sum_{b=1}^{N_{\text{test}}} 
\mathbf{1}[\hat{c}_b = c_b],
\end{align}
where $c_b$ is the true label of the $b$-th test batch. We report $\mathrm{Acc}$ with respect to the test batch size (i.e, the number of test samples). For a fair comparison, the significance level is set to $\alpha=0.05$ for all quantile-calibrated methods.

\cref{fig.test_acc} reports the detection accuracy of all methods with respect to the test batch size $n$, from $10$ to $100$ in steps of $10$. The two distribution-based methods, namely GDS-Peak and GDS-MW, consistently achieve the highest accuracy across all batch sizes and remain robust even with few test samples, since modeling the full distribution of 
high-frequency features captures distributional shifts that point-wise statistics miss. In contrast, the point-wise baselines 
are clearly limited: GSP-GFT, relying on a single high-frequency coefficient, plateaus at a low accuracy and barely improves with larger batches, confirming that a scalar statistic is insufficient to capture distributional differences; GSP-TV and ELBO improve as the batch size increases but require many more samples to approach, yet never surpass, the GDS methods. These results demonstrate that modeling the distributional structure of high-frequency components yields more reliable and sample-efficient anomaly detection.

\section{Conclusion}
\label{sec.con}
We propose a \gls{GDS} framework that generalizes classical graph signal processing by modeling signals as probability distributions in the Wasserstein space and generalizing fixed graph topologies into signal-adaptive graph structures. This approach enables the principled handling of uncertainty and stochasticity in both signals and graphs, while strictly encompassing traditional GSP as a special case. We established a systematic correspondence between core GSP concepts and their GDS analogues, including the Fourier transform and filtering. Through two applications, graph filter learning and anomaly detection, we demonstrate the predictive benefits, robustness, and sample efficiency of the proposed framework. These results highlight the potential of the GDS framework for advancing GSP theory and for applications involving uncertain or irregular graph-structured data.

\appendix

\subsection{Proof of \cref{thm:properties_fixed}}\label{sec.propo_proof}

For any linear map $\bA\in \Real^{N\times N}$ and any coupling $\pi\in\Pi(\mu,\nu)$, the pushforward $(\bA,\bA)_{\ast}\pi$ is a coupling of $(\bA)_{\ast}\mu$ and $(\bA)_{\ast}\nu$. Therefore, 
\begin{align*} 
W_p^p\!\left((\bA)_{\ast}\mu,(\bA)_{\ast}\nu\right) &\leq \int_{\bbR^N\times\bbR^N} \|\bA\bx-\bA\by\|_2^p \,\ud\pi(\bx,\by) \\ 
% &= \int_{\bbR^N\times\bbR^N} \|\bA(\bx-\by)\|_2^p \,\mathrm{d}\pi(\bx,\by) \\ 
&\leq \|\bA\|_2^p \int_{\bbR^N\times\bbR^N} \|\bx-\by\|_2^p \,\ud\pi(\bx,\by). 
\end{align*}
Taking the infimum over all $\pi\in\Pi(\mu,\nu)$ gives
\begin{align*}
% \label{eq:linear_pushforward_lipschitz} 
W_p\!\left((\bA)_{\ast}\mu,(\bA)_{\ast}\nu\right) \leq \|\bA\|_2 W_p(\mu,\nu).
\end{align*}
For Property \ref{item:isometry_GDS_FT}, since $\bU_G$ is orthogonal, $\bU_G\T \bU_G=\bI$ and $\|\bU_G\T\bz\|_2=\|\bz\|_2$ for every $\bz\in\bbR^N$. Applying \cref{eq:linear_pushforward_lipschitz} with $\bA=\bU_G\T$ yields
\begin{align}
\label{eq.inequal_left}
W_p\!\left((\bU_G\T)_{\ast}\mu,(\bU_G\T)_{\ast}\nu\right) \leq W_p(\mu,\nu). 
\end{align}
Conversely, applying the same inequality with $\bA=\bU_G$ to the transformed measures gives
\begin{align}
\begin{aligned}
\label{eq.inequal_right}
W_p(\mu,\nu) &= W_p\!\left( (\bU_G)_{\ast}(\bU_G\T)_{\ast}\mu,\, (\bU_G)_{\ast}(\bU_G\T)_{\ast}\nu \right) \\
&\leq W_p\!\left((\bU_G\T)_{\ast}\mu,(\bU_G\T)_{\ast}\nu\right).  
\end{aligned}
\end{align}
Combining \cref{eq.inequal_left} and \cref{eq.inequal_right} proves 
\begin{align} 
W_p\!\left((\bU_G\T)_{\ast}\mu,(\bU_G\T)_{\ast}\nu\right) = W_p(\mu,\nu). 
\end{align}
% For Property \ref{item:lipschitz_GDS_filtering}, applying \cref{eq:linear_pushforward_lipschitz} with $\bA=\bF_G$ directly gives
% \begin{align} 
% W_p\!\left((\bF_G)_{\ast}\mu,(\bF_G)_{\ast}\nu\right) \leq \|\bF_G\|_2 W_p(\mu,\nu). 
% \end{align}
% In particular, since $\bF_{G}$ is a polynomial in the symmetric matrix $\bS_{G}$, it shares the same orthonormal eigenbasis as $\bS_G$ and can be written as
% \begin{align*} 
% \bF_G = \bU_G h(\bLambda_G)\bU_G\T, 
% \end{align*} 
% where $h(\cdot)$ denotes the corresponding polynomial frequency response. Hence, $\|\bF_G\|_2 = \max_i |h(\lambda_i)|$,
% so the Lipschitz constant is determined by the maximum magnitude of the graph filter frequency response.
Property \ref{item.invertibility_GDS_FT} follows from the orthogonality of $\bU_{G}$, which ensures that $(\bU_{G})_{\ast}(\bU\T_{G})_{\ast}\mu = (\bU_{G}\bU\T_{G})_{\ast}\mu = \mu$. Property \ref{item:composition} is a direct consequence of the composition rule for pushforward measures, namely $f_{\ast}(g_{\ast}\mu) = (f \circ g)_{\ast}\mu$, applied with $f = \bF_{G}$ and $g = \bH_{G}$.

\subsection{Proof of \cref{Lemma_pullback}}\label{proof_lemma1}

We begin by proving that $\calA^{\ast}\mu$ is a probability measure on $\Real^{N} \times \calG_{N}$. 
%By \cref{def.SAGS}, $\nu_{\bx} = \calA(\bx, \cdot)$ defines a probability kernel from $\Real^{N}$ to $\calG_{N}$, i.e., for every $\bx$, $\nu_{\bx}$ is a probability measure on $\calG_N$, and for every $B \in \calB(\calG_N)$, the map $\bx \mapsto \nu_{\bx}(B)$ is Borel measurable. 
Given a probability measure $\mu$ on $\Real^{N}$ and a probability kernel $\nu_{\bx}$ from $\Real^{N}$ to $\calG_{N}$, there exists a unique Borel probability measure on $\Real^{N} \times \calG_{N}$, namely the law of $(\bx, G)$ with $\bx \sim \mu$ and $G \mid \bx \sim \nu_{\bx}$ \cite[Sec.~4.1.3]{Dur19}. Setting
$B = \calG_N$ in \cref{eq:cmi} and using $\nu_{\bx}(\calG_N) = 1$ identifies the
$\Real^N$-marginal of $\calA^*\mu$ as $\mu$, i.e.,
\begin{align*}
% \label{eq:marginal}
  (\calA^*\mu)(A \times \calG_N) = \int_A \nu_{\bx}(\calG_N) \, \ud\mu(\bx) = \mu(A).  
\end{align*}
with $A \in \calB(\Real^N)$.

Next, we verify that $\calA^{\ast}\mu$ has finite $p$-th moment. Denote the metric $d_{\calG}$ on $\calG_{N}$ (i.e., the Euclidean metric induced by the adjacency parametrization) and equip $\Real^{N}\times\calG_{N}$ with 
\begin{align*}
% \label{eq:prod-metric}
    d_{\times}\big((\bx, G), (\by, H)\big) := \big(\|\bx - \by\|^p + d_{\calG}(G, H)^p\big)^{1/p}.
\end{align*}
Fix $z_0 = (\mathbf{0}, G_0)$. For any $ z = (\bx,G) \in \Real^{N} \times \calG_{N}$, we have 
\begin{align*}
    d_{\times}(z,z_{0})^{p} =  \| \bx \|^{p} + d_{\calG}(G, G_{0})^{p} \geq 0.
\end{align*}
Tonelli's theorem \cite[Section 1.7]{Dur19} and the definition of $\calA^{*}\mu$ yield
\begin{align*}
\begin{aligned}
&\int_{\Real^{N} \times \calG_{N}} d_{\times}(z, z_{0})^{p} \ud(\calA^{\ast}\mu)(z) \\
   &\quad = \int_{\Real^{N}}\int_{\calG_{N}} \left( \|\bx\|^{p} + d_{\calG}(G,G_{0})^{p} \right) \ud \nu_{\bx}(G) \ud\mu(\bx)\\
   &\quad = \int_{\Real^{N}} \|\bx\|^{p} \ud \mu(\bx) + \int_{\Real^{N}}\int_{\calG_{N}}d_{\calG}(G,G_{0})^{p}\ud \nu_{\bx}(G)\ud\mu(\bx).
\end{aligned}
\end{align*}
The first term is finite since $\mu \in \calP_{p}(\Real^{N})$. For the second term, since $\delta_{G_{0}}$ is the Dirac measure at $G_{0}$, the only coupling between $\nu_{\bx}$ and $\delta_{G_0}$ is the product $\nu_{\bx} \otimes
\delta_{G_0}$. Hence, for every $\bx \in \Real^N$, $W_p^p(\nu_{\bx}, \delta_{G_0}) = \int_{\calG_N} d_{\calG}(G, G_0)^p \, \ud\nu_{\bx}(G)$.
% \begin{align}
% \label{eq:wp-dirac}    
% \end{align}
Therefore, 
\begin{align*}
    &\int_{\Real^{N}}\int_{\calG_{N}}d_{\calG}(G,G_{0})^p \ud\nu_{\bx}(G)\ud\mu(\bx)\\
    & = \int_{\Real^{N}} W_{p}^p(\nu_{\bx},\delta_{G_0}) \ud\mu(\bx) < \infty,
\end{align*}
where the last inequality follows from \cref{eq:sags-moment}. Combining this with the finiteness of the first term gives 
\begin{align*}
    \int_{\Real^{N}\times\calG_N} d_{\times}(z,z_0)^p \ud(\calA^*\mu)(z)<\infty.
\end{align*}
Thus $\calA^{\ast}\mu$ has finite $p$-th moment on $\Real^N\times\calG_N$. Since $\calA^{\ast}\mu$ is a Borel probability measure on $\Real^{N}\times\calG_{N}$, we conclude that $\calA^{\ast}\mu \in \calP_{p}(\Real^{N}\times \calG_{N})$.

\subsection{Proof of \cref{thm:uc}}\label{proof_thm_uc}

Since $C$ is compact, every probability measure supported on $C$ has finite $p$-th moment, and hence $\calP(C)=\calP_{p}(C)$. By Prokhorov's theorem \cite{Bill1999} and the equivalence between weak convergence and $W_{p}$ convergence on compact metric spaces, $\calP(C)$ is compact under $W_{p}$. Since $\calM: C \to \calP_{p}(M_{N}(\Real))$ is continuous and $C$ is compact, the Heine-Cantor theorem implies that $\calM$ is uniformly continuous on $C$. Define
\begin{align}
\label{eq.def_S}
    R &:= \sup_{\bx \in C} \|\bx\| < \infty, \\ 
    S &:= \sup_{\bx \in C} W_{p}(\calM(\bx), \delta_{\bm{0}}) < \infty, 
\end{align}
where $\delta_{\bm{0}}$ denotes the Dirac measure at the zero matrix in $M_{N}(\Real)$. 

\emph{Step 1 (Dirac measures):} We first show that the map $\bx \mapsto \calT(\delta_{\bx})$ is uniformly continuous from $C$ to $\calP_{p}(\Real^{N})$. Fix $\bx_{1}, \bx_{2} \in C$. Let $\gamma_{\bx_{1},\bx_{2}}$ be an optimal coupling between $\calM(\bx_{1})$ and $\calM(\bx_{2})$, i.e.,
\begin{align}
\begin{aligned}
W_{p}^{p}(&\calM(\bx_{1}),\calM(\bx_{2}))\\
&=\int \|\bM_{1}-\bM_{2}\|^{p} \ud \gamma_{\bx_{1},\bx_{2}}(\bM_{1},\bM_{2}).   
\end{aligned}
\end{align}
Since $\calM$ is continuous on the compact set $C$, the map $(\bx_1,\bx_2)\mapsto(\calM(\bx_1),\calM(\bx_2))$ is continuous, hence Borel measurable, from $C\times C$ to $\calP_p(M_N(\Real))\times\calP_p(M_N(\Real))$. By the measurable selection theorem for optimal couplings \cite[Corollary~5.22]{Vil09}, the fiberwise optimal couplings can be chosen so that $(\bx_1,\bx_2)\mapsto\gamma_{\bx_1,\bx_2}$ is a Borel-measurable kernel from $C\times C$ to $\calP\big(M_N(\Real)\times M_N(\Real)\big)$; we fix such a measurable selection throughout. Push $\gamma_{\bx_{1},\bx_{2}}$ forward through the map $(\bM_{1},\bM_{2}) \mapsto (\bM_{1}\bx_{1},\bM_{2}\bx_{2})$ and denote the resulting measure on $\Real^{N} \times \Real^{N}$ by $\gamma'_{\bx_{1},\bx_{2}}$. 
By construction, $\gamma'_{\bx_{1},\bx_{2}}$ is a coupling between $\calT(\delta_{\bx_{1}})$ and $\calT(\delta_{\bx_{2}})$, and, since pushforward under a fixed continuous map preserves measurability of kernels, $(\bx_1,\bx_2)\mapsto\gamma'_{\bx_1,\bx_2}$ is likewise a Borel-measurable kernel on $C\times C$. Hence,
\begin{align}
\begin{aligned}
\label{proof.step1}
W_{p}^{p}&(\calT(\delta_{\bx_{1}}),\calT(\delta_{\bx_{2}}))\\
&\leq \int \|\bM_{1}\bx_{1} - \bM_{2}\bx_{2}\|^{p} \ud \gamma_{\bx_{1},\bx_{2}}(\bM_{1},\bM_{2}). 
\end{aligned}
\end{align}
Using $\bM_{1}\bx_{1}-\bM_{2}\bx_{2}=(\bM_{1}-\bM_{2})\bx_{1}+\bM_{2}(\bx_{1}-\bx_{2})$ and $\|a+b\|^{p}\leq 2^{p-1}(\|a\|^{p}+\|b\|^{p})$, we obtain
\begin{align}\label{ineq:Wpp}
\ml{
W_{p}^{p}(\calT(\delta_{\bx_{1}}),\calT(\delta_{\bx_{2}})) \leq 
2^{p-1} \int \|(\bM_{1}-\bM_{2})\bx_{1}\|^{p} \ud\gamma_{\bx_{1},\bx_{2}} \\
+ 2^{p-1} \int \|\bM_{2}(\bx_{1}-\bx_{2})\|^{p} \ud\gamma_{\bx_{1},\bx_{2}}.
}
\end{align}
The first term on the \gls{RHS} of \cref{ineq:Wpp} is bounded as:
\begin{align}\label{proof.first_step1}
\int \|(\bM_{1}-\bM_{2})\bx_{1}\|^{p} \ud\gamma_{\bx_{1},\bx_{2}} 
\leq R^{p} W_{p}^{p}( \calM(\bx_{1}), \calM(\bx_{2}) ). 
\end{align}
For the second term, using $\|\bM_{2}(\bx_{1}-\bx_{2})\| \leq \|\bM_{2}\|\|\bx_{1}-\bx_{2}\|$ and noting that the second marginal of $\gamma_{\bx_{1},\bx_{2}}$ is $\calM(\bx_{2})$, we have
\begin{align}
&\int \|\bM_{2}(\bx_{1}-\bx_{2})\|^{p} \ud\gamma_{\bx_{1},\bx_{2}} \nn
&\leq \|\bx_{1}-\bx_{2}\|^{p} \int \|\bM_{2}\|^{p} \ud \calM(\bx_{2})(\bM_{2})\nn
&= \|\bx_{1}-\bx_{2}\|^{p}W_{p}^{p}(\calM(\bx_{2}),\delta_{\bm{0}})\nn
&\leq S^{p} \|\bx_{1}-\bx_{2}\|^{p}. \label{proof.sec_step1}   
\end{align}
Combining \cref{proof.step1,proof.first_step1,proof.sec_step1}, we arrive at
\begin{align}
\begin{aligned}
\label{Proof.step1_final}
W_{p}^{p}(\calT(\delta_{\bx_{1}}),\calT(\delta_{\bx_{2}}))
&\leq 2^{p-1}R^{p} W_{p}^{p}(\calM(\bx_{1}), \calM(\bx_{2})) \\
& \quad + 2^{p-1} S^{p} \|\bx_{1}-\bx_{2}\|^{p}. 
\end{aligned}
\end{align}
Since $\calM$ is uniformly continuous on $C$, both terms on the \gls{RHS} converge to zero uniformly as $\|\bx_{1}-\bx_{2}\| \to 0$. Hence, $\bx \mapsto \calT(\delta_{\bx})$ is uniformly continuous on $C$.
% From \cref{Proof.step1_final}, we extract the following quantitative bounds. For every $\varepsilon>0$, there exist constants $B_{\varepsilon}, C_{\varepsilon}>0$ such that for all $\bx_{1}, \bx_{2}\in K$, we have
% \begin{align}
% \begin{aligned}
% \label{proof.step1_rig_bound}
% &\|\bx_{1}-\bx_{2}\| \geq \varepsilon \\
%     &\qquad \implies W_{2}(\calT_{\calG}(\delta_{\bx_{1}}),\calT_{\calG}(\delta_{\bx_{2}}))^{2}\leq B_{\varepsilon}\|\bx_{1}-\bx_{2}\|^{2}
% \end{aligned}
% \end{align}
% and 
% \begin{align}
% \label{proof.step1_left_bound}
%     \|\bx_{1}-\bx_{2}\| < \varepsilon \implies W_{2}(\calT_{\calG}(\delta_{\bx_{1}}),\calT_{\calG}(\delta_{\bx_{2}}))^{2} \leq C_{\varepsilon}
% \end{align}
% where $C_{\varepsilon} \to 0 $ as $\varepsilon \to 0$. 

\emph{Step 2 (General measures)}: Let $\mu, \mu' \in \calP(C)$. Let $\eta$ be an optimal coupling between $\mu$ and $\mu'$. Thus,
\begin{align}
W_{p}^{p}(\mu,\mu')=\int_{ C \times C} \|\bx_{1}-\bx_{2}\|^{p} \ud \eta(\bx_{1},\bx_{2}).
\end{align}
For each $(\bx_{1}, \bx_{2})\in C \times C$, let $\gamma'_{\bx_{1},\bx_{2}}$ be the coupling
between $\calT(\delta_{\bx_{1}})$ and $\calT(\delta_{\bx_{2}})$ constructed in \emph{Step 1}, i.e. $\gamma'_{\bx_{1},\bx_{2}}$ is obtained by pushing forward $\gamma_{\bx_{1},\bx_{2}}$ through the map $(\bM_{1},\bM_{2}) \mapsto (\bM_{1}\bx_{1},\bM_{2}\bx_{2})$, using the Borel-measurable kernel $(\bx_1,\bx_2)\mapsto\gamma_{\bx_1,\bx_2}$ fixed in \emph{Step 1}. Define a measure $\eta'$ on $\Real^{N} \times \Real^{N}$ by requiring that, for every bounded continuous function $g: \Real^{N} \times \Real^{N} \to \Real$,
\begin{align}
\begin{aligned}
&\int_{\Real^{N}\times\Real^{N}}g(\bw_{1},\bw_{2})\ud\eta'(\bw_{1},\bw_{2})\\
    &:= \int_{C \times C}\int_{\Real^{N}\times\Real^{N}} g(\bw_{1},\bw_{2}) \ud\gamma'_{\bx_{1},\bx_{2}}(\bw_{1},\bw_{2})\ud\eta(\bx_{1},\bx_{2}).
\end{aligned}
\end{align}
This is well defined since $(\bx_1,\bx_2)\mapsto\gamma'_{\bx_1,\bx_2}$ is a Borel-measurable kernel and $g$ is bounded and continuous, the map $(\bx_1,\bx_2)\mapsto\int g\,d\gamma'_{\bx_1,\bx_2}$ is Borel measurable and bounded on the compact set $C\times C$. Therefore, the outer integral against $\eta$ is well defined; the resulting functional $g\mapsto\int g\,d\eta'$ is positive and linear on bounded continuous functions with $\int \ud\eta'=1$, so by the Riesz representation theorem it defines a (unique) Borel probability measure $\eta'$ on $\Real^N\times\Real^N$. We now verify that $\eta'$ is a coupling between $\calT(\mu)$ and $\calT(\mu')$. Let $h: \Real^{N} \to \Real$ be bounded and continuous. Since the first marginal of $\gamma'_{\bx_{1},\bx_{2}}$ is $\calT(\delta_{\bx_{1}})$, we have
\begin{align}
\begin{aligned}
&\int_{\Real^{N}\times\Real^{N}} h(\bw_{1})\ud\eta'(\bw_{1},\bw_{2}) \\
   & = \int_{C\times C}\int_{\Real^{N}\times\Real^{N}} h(\bw_{1}) \ud\gamma'_{\bx_{1},\bx_{2}}(\bw_{1},\bw_{2})\ud\eta(\bx_{1},\bx_{2})\\
   & = \int_{C\times C}\int_{\Real^{N}} h(\bw_{1}) \ud\calT(\delta_{\bx_{1}})(\bw_{1})\ud\eta(\bx_{1},\bx_{2})\\
   & = \int_{C}\int_{\Real^{N}} h(\bw_{1}) \ud\calT(\delta_{\bx_{1}})(\bw_{1})\ud\mu(\bx_{1})\\
   & = \int_{\Real^{N}} h(\bw_{1})\ud \calT(\mu)(\bw_{1}).
\end{aligned}
\end{align}
Thus the first marginal of $\eta'$ is $\calT(\mu)$. By the same argument, the second marginal of $\eta'$ is $\calT(\mu')$. Since $\eta'$ is a coupling between $\calT(\mu)$ and $\calT(\mu')$, 
\begin{align}
\begin{aligned}
\label{proof.step2_bound}
    &W_{p}^{p}(\calT(\mu),\calT(\mu'))\leq \int_{\Real^{N}\times\Real^{N}} \|\bw_{1}-\bw_{2}\|^{p}\ud\eta'(\bw_{1},\bw_{2})\\
    & = \int_{C \times C}\int_{\Real^{N} \times \Real^{N}} \|\bw_{1}-\bw_{2}\|^{p} \ud\gamma'_{\bx_{1},\bx_{2}}(\bw_{1},\bw_{2})\ud\eta(\bx_{1},\bx_{2})
\end{aligned}
\end{align}
Let $\omega(r) := \sup_{\substack{\bx_1,\bx_2\in C\\ \|\bx_1-\bx_2\|\le r}} W_{p}\left(\calM(\bx_1),\calM(\bx_2)\right)$. Since $\calM$ is uniformly continuous on $C$, we have $\omega(r) \to 0$ as $r \to 0$. Moreover, by the estimate in \emph{Step 1}, the inner integral satisfies 
\begin{align*}
\begin{aligned}
&\int_{\Real^{N}\times \Real^{N}} \|\bw_{1}-\bw_{2}\|^{p} \ud\gamma'_{\bx_{1},\bx_{2}}(\bw_{1},\bw_{2}) \\
&\leq 2^{p-1}R^{p} W_{p}^{p}(\calM(\bx_{1}), \calM(\bx_{2})) + 2^{p-1} S^{p} \|\bx_{1}-\bx_{2}\|^{p}.  
\end{aligned}
\end{align*}
Fix $\varepsilon>0$. Split the integral over $ C \times C$ into the regions $\|\bx_{1}-\bx_{2}\|<\varepsilon$ and $\|\bx_{1}-\bx_{2}\| \geq \varepsilon$. On the first region, we have $W_{p}^{p}(\calM(\bx_{1}), \calM(\bx_{2}))\leq \omega(\varepsilon)^{p}$. On the second region, using the triangle inequality and the definition of $S$ in \cref{eq.def_S}, we have
\begin{align*}
& W_{p}(\calM(\bx_{1}), \calM(\bx_{2})) \\
& \leq W_{p}(\calM(\bx_{1}), \delta_{\bm{0}}) +  W_{p}(\calM(\bx_{2}), \delta_{\bm{0}}) \leq 2S,
\end{align*}
so that
\begin{align*}
\begin{aligned}
      W_{p}^{p}(\calM(\bx_{1}), \calM(\bx_{2})) \leq (2S)^{p}.
\end{aligned}
\end{align*}
Moreover, by Markov's inequality, we have
\begin{align*}
\begin{aligned}
    &\eta(\|\bx_{1}-\bx_{2}\|\geq \varepsilon) \\
    &\quad \leq \frac{1}{\varepsilon^{p}} \int_{ C \times C} \|\bx_{1}-\bx_{2}\|^{p} \ud\eta(\bx_{1},\bx_{2}) = \frac{W_{p}^{p}(\mu,\mu')}{\varepsilon^{p}}
\end{aligned}
\end{align*}
where the last equality follows from the optimality of the coupling $\eta$. Therefore, 
\begin{align*}
    W_{p}^{p}&(\calT(\mu),\calT(\mu')) \leq 2^{p-1}R^{p}\omega(\varepsilon)^{p}\\
    &+2^{p-1}R^{p}(2S)^{p}\frac{W_{p}^{p}(\mu,\mu')}{\varepsilon^{p}}+2^{p-1}S^{p}W_{p}^{p}(\mu,\mu').
\end{align*}
Given any $\alpha>0$, choose $\varepsilon > 0$ sufficiently small such that $2^{p-1}R^{p}\omega(\varepsilon)^{p} < \alpha^{p}/2$. For this fixed $\varepsilon$, define $B_\varepsilon:= 2^{p-1}R^p(2S)^p/\varepsilon^p + 2^{p-1}S^p$. Choose $\kappa:=\left(\alpha^{p}/(2B_{\varepsilon})\right)^{1/p}$. Then, for all $\mu,\mu' \in \calP_{p}(C)$ satisfying $W_{p}(\mu,\mu') < \kappa $, we have 
\begin{align}
\begin{aligned}
W_{p}^{p}(\calT(\mu),\calT(\mu')) 
  & < 2^{p-1}R^{p}\omega(\varepsilon)^{p} + B_{\varepsilon}W_{p}^{p}(\mu,\mu') \\
  & < \frac{\alpha^{p}}{2} + B_{\varepsilon}\kappa^{p}
   = \frac{\alpha^{p}}{2} + \frac{\alpha^{p}}{2} = \alpha^{p}.
\end{aligned}
\end{align}
Since $\kappa$ depends only on $\alpha$ and not on the particular $\mu$ and $\mu'$, this proves that $\calT: \calP_{p}(C) \to \calP_{p}(\Real^{N})$ is uniformly continuous.

\subsection{Proof of \cref{thm:lip}}\label{proof_thm:lip}

Let $L>0$ be the Lipschitz constant of $\calM$, and let $\delta_{\bm{0}}$ denote the Dirac measure at the zero matrix in $M_{N}(\Real)$. Since $\calM(\bm{0})\in \calP_{p}(M_{N}(\Real))$, define 
\begin{align*}
    S_{0}:=W_{p}(\calM(\bm{0}),\delta_{\bm{0}})< \infty.
\end{align*}
By the triangle inequality and the Lipschitz continuity of $\calM$, for every $\bx \in \Real^{N}$, we have
\begin{align}
\begin{aligned}
\label{eq:M_growth}
W_{p}(\calM(\bx),\delta_{\bm{0}})&\leq W_{p}(\calM(\bx),\calM(\bm{0})) + W_{p}(\calM(\bm{0}),\delta_{\bm{0}})\\
& \leq L \|\bx\| + S_{0}
\end{aligned}
\end{align}
We first verify that $\calT(\mu)\in\calP_p(\Real^N)$ for every $\mu\in\calP_{2p}(\Real^N)$. Let $\mu\in\calP_{2p}(\Real^N)$. Since $\calT(\mu)$ is the distribution of $\bw=\bM\bx$, where $\bx\sim\mu$ and $\bM\sim\calM(\bx)$, we have
\begin{align}
\int_{\Real^N}&\|\bw\|^p\,\ud\calT(\mu)(\bw)\nn
& =
\int_{\Real^N}\int_{M_N(\Real)}
\|\bM\bx\|^p
\,\ud\calM(\bx)(\bM)\,\ud\mu(\bx) \nn
& \leq
\int_{\Real^N}\|\bx\|^p
\int_{M_N(\Real)}
\|\bM\|^p \ud\calM(\bx)(\bM)\ud\mu(\bx) \nn
&=
\int_{\Real^N}
\|\bx\|^p
W_p^p\big(\calM(\bx),\delta_{\mathbf 0}\big) \ud\mu(\bx) \nn
&\leq
\int_{\Real^N} \|\bx\|^p \big(L\|\bx\|+S_0\big)^p \ud\mu(\bx). \label{eq:T_well_defined}
\end{align}
Moreover, 
\begin{align*}
\|\bx\|^p\big(L\|\bx\|+S_0\big)^p &\leq 2^{p-1}L^p\|\bx\|^{2p} + 2^{p-1}S_0^p\|\bx\|^p.
\end{align*}
Since $\mu\in\calP_{2p}(\Real^N)$, the right-hand side of \cref{eq:T_well_defined} is finite. Hence, $\calT(\mu)\in\calP_p(\Real^N)$. Therefore, $\calT$ is well-defined on $\calP_{2p}(\Real^N)$.

We now prove continuity at an arbitrary $\mu\in\calP_{2p}(\Real^N)$. Let $\mu'\in\calP_{2p}(\Real^N)$, and let $\eta$ be an optimal coupling between $\mu$ and $\mu'$ for $W_{2p}$. Then,
\begin{align}
\label{eq:eta_w2p}
    W_{2p}(\mu,\mu') = \left(\int_{\Real^N\times\Real^N}
\|\bx_1-\bx_2\|^{2p} \ud\eta(\bx_1,\bx_2) \right)^{\frac{1}{2p}}.
\end{align}
For each pair $(\bx_1,\bx_2)$, let $\gamma_{\bx_1,\bx_2}$ be an optimal coupling between $\calM(\bx_1)$ and $\calM(\bx_2)$. Since $\calM$ is Lipschitz continuous, it is in particular Borel measurable from $\Real^N$ to $\calP_p(M_N(\Real))$. Using the same argument as in the proof of \cref{thm:uc}, we can fix a Borel-measurable kernel $(\bx_1,\bx_2)\mapsto\gamma_{\bx_1,\bx_2}$ from $\Real^N\times\Real^N$ to $\calP(M_N(\Real)\times M_N(\Real))$ such that $\gamma_{\bx_1,\bx_2}$ is optimal between $\calM(\bx_1)$ and $\calM(\bx_2)$ for every $(\bx_1,\bx_2)$.
Push it forward through the map
\begin{align*}
(\bM_1,\bM_2)&\mapsto(\bM_1\bx_1,\bM_2\bx_2),
\end{align*}
and denote the resulting coupling between $\calT(\delta_{\bx_{1}})$ and $\calT(\delta_{\bx_{2}})$ by $\gamma'_{\bx_{1},\bx_{2}}$. Since pushforward under a fixed continuous map preserves measurability of kernels, $(\bx_1,\bx_2)\mapsto\gamma'_{\bx_1,\bx_2}$ is likewise a Borel-measurable kernel. Define the corresponding fiberwise cost
\begin{align*}
Q_p(\bx_1,\bx_2)
&:=
\int_{\Real^N\times\Real^N}
\|\bw_1-\bw_2\|^p
\,\ud\gamma'_{\bx_1,\bx_2}(\bw_1,\bw_2).
\end{align*}
By construction,
\begin{align*}
& Q_p(\bx_1,\bx_2) \\
& \quad = \int_{M_{N}(\Real)\times M_{N}(\Real)}
\|\bM_1\bx_1-\bM_2\bx_2\|^p \ud\gamma_{\bx_1,\bx_2}(\bM_1,\bM_2). 
\end{align*}
Using the decomposition
\begin{align*}
\bM_1\bx_1-\bM_2\bx_2
&=
(\bM_1-\bM_2)\bx_1+\bM_2(\bx_1-\bx_2)
\end{align*}
and the triangle inequality
\begin{align*}
\|a+b\|^p&\leq 2^{p-1}\big(\|a\|^p+\|b\|^p\big),
\end{align*}
we obtain
\begin{align}
\label{eq:Q_split}
\begin{aligned}
&Q_p(\bx_1,\bx_2)\\
&\quad\leq
2^{p-1}
\int_{M_{N}(\Real)\times M_{N}(\Real)}
\|(\bM_1-\bM_2)\bx_1\|^p
\,\ud\gamma_{\bx_1,\bx_2} \\
&\qquad+
2^{p-1}
\int_{M_{N}(\Real)\times M_{N}(\Real)}
\|\bM_2(\bx_1-\bx_2)\|^p
\,\ud\gamma_{\bx_1,\bx_2}.
\end{aligned}
\end{align}
For the first term, since
\begin{align*}
\|(\bM_1-\bM_2)\bx_1\|
&\leq
\|\bM_1-\bM_2\|_{\mathrm{op}}\|\bx_1\|,
\end{align*}
and $\gamma_{\bx_1,\bx_2}$ is optimal between $\calM(\bx_1)$ and $\calM(\bx_2)$, we have
\begin{align}
\label{eq:first_Q_bound}
\begin{aligned}
&\int_{M_{N}(\Real)\times M_{N}(\Real)}
\|(\bM_1-\bM_2)\bx_1\|^p
\,\ud\gamma_{\bx_1,\bx_2}\\
&\qquad \qquad \qquad \leq
\|\bx_1\|^p
W_p^p\big(\calM(\bx_1),\calM(\bx_2)\big) \\
&\qquad \qquad \qquad \leq
L^p\|\bx_1\|^p\|\bx_1-\bx_2\|^p .
\end{aligned}
\end{align}
For the second term, since
\begin{align*}
\|\bM_2(\bx_1-\bx_2)\|
&\leq
\|\bM_2\|_{\mathrm{op}}\|\bx_1-\bx_2\|,
\end{align*}
and the second marginal of $\gamma_{\bx_1,\bx_2}$ is $\calM(\bx_2)$, we obtain
\begin{align}
\label{eq:second_Q_bound}
\begin{aligned}
\int_{M_{N}(\Real)\times M_{N}(\Real)}
&\|\bM_2(\bx_1-\bx_2)\|^p
\,\ud\gamma_{\bx_1,\bx_2}\\
&\qquad \leq
\|\bx_1-\bx_2\|^p
W_p^p\big(\calM(\bx_2),\delta_{\mathbf 0}\big) \\
&\qquad \leq
\big(L\|\bx_2\|+S_0\big)^p
\|\bx_1-\bx_2\|^p ,
\end{aligned}
\end{align}
where the last inequality follows from \cref{eq:M_growth}. Combining
\cref{eq:Q_split,eq:first_Q_bound,eq:second_Q_bound}, we get
\begin{align}
\begin{aligned}
\label{eq:Q_final_bound}
&Q_p(\bx_1,\bx_2)\\
&\qquad \leq
2^{p-1}
\Big[
L^p\|\bx_1\|^p
+
\big(L\|\bx_2\|+S_0\big)^p
\Big]
\|\bx_1-\bx_2\|^p .   
\end{aligned}
\end{align}

Next, because $(\bx_1,\bx_2)\mapsto\gamma'_{\bx_1,\bx_2}$ is a Borel-measurable kernel, we may define a measure $\eta'$ on $\Real^N\times\Real^N$ via the disintegration formula: for any Borel set $A\subseteq\Real^N\times\Real^N$,
\begin{align*}
\eta'(A) := \int_{\Real^N\times\Real^N} \gamma'_{\bx_1,\bx_2}(A)\,\ud\eta(\bx_1,\bx_2),
\end{align*}
i.e., $\eta'$ is obtained by first sampling $(\bx_1,\bx_2)\sim\eta$ and then sampling $(\bw_1,\bw_2)\sim\gamma'_{\bx_1,\bx_2}$. Since the first marginal of $\gamma'_{\bx_1,\bx_2}$ equals $\calT(\delta_{\bx_1})$ for every $\bx_1$, and the first marginal of $\eta$ is $\mu$, integrating over $\bx_1\sim\mu$ shows that the first marginal of $\eta'$ is $\int\calT(\delta_{\bx_1})\,\ud\mu(\bx_1)=\calT(\mu)$; by the symmetric argument, the second marginal of $\eta'$ is $\calT(\mu')$. Hence $\eta'$ is a valid coupling of $\calT(\mu)$ and $\calT(\mu')$. Moreover, since $\|\bw_1-\bw_2\|^p\geq 0$, Tonelli's theorem gives
\begin{align*}
&\int_{\Real^N\times\Real^N} \|\bw_1-\bw_2\|^p\,\ud\eta'(\bw_1,\bw_2) \\
&= \int_{\Real^N\times\Real^N} Q_p(\bx_1,\bx_2)\,\ud\eta(\bx_1,\bx_2).
\end{align*}
Since $\eta'$ is a coupling of $\calT(\mu)$ and $\calT(\mu')$, we conclude
\begin{align*}
W_p^p\big(\calT(\mu),\calT(\mu')\big)
&\leq
\int_{\Real^{N}\times\Real^{N}} Q_p(\bx_1,\bx_2)\,\ud\eta(\bx_1,\bx_2).
\end{align*}
Using \cref{eq:Q_final_bound}, this gives
\begin{align}
\label{eq:T_continuity_pre_holder}
\begin{aligned}
&W_p^p\big(\calT(\mu),\calT(\mu')\big)\\
&\qquad\leq
2^{p-1}L^p
\int
\|\bx_1\|^p\|\bx_1-\bx_2\|^p
\,\ud\eta \\
&\qquad\qquad +
2^{p-1}
\int
\big(L\|\bx_2\|+S_0\big)^p
\|\bx_1-\bx_2\|^p
\,\ud\eta .
\end{aligned}
\end{align}
By Hölder's inequality,
\begin{align}
\label{eq:holder_first}
\begin{aligned}
\int&
\|\bx_1\|^p\|\bx_1-\bx_2\|^p
\,\ud\eta \\ 
&\leq
\left(
\int \|\bx_1\|^{2p}\,\ud\eta
\right)^{1/2}
\left(
\int \|\bx_1-\bx_2\|^{2p}\,\ud\eta
\right)^{1/2} \\
&=
W_{2p}^p(\mu,\delta_{\mathbf 0})
W_{2p}^p(\mu,\mu').
\end{aligned}
\end{align}
Similarly,
\begin{align}
\label{eq:holder_second}
\begin{aligned}
&\int
\big(L\|\bx_2\|+S_0\big)^p
\|\bx_1-\bx_2\|^p
\,\ud\eta \\
&\quad\leq
\left(
\int
\big(L\|\bx_2\|+S_0\big)^{2p}
\,\ud\eta
\right)^{1/2}
\left(
\int \|\bx_1-\bx_2\|^{2p}\,\ud\eta
\right)^{1/2} \\
&\quad=
\left(
\int_{\Real^N}
\big(L\|\bx\|+S_0\big)^{2p}
\,\ud\mu'(\bx)
\right)^{1/2}
W_{2p}^p(\mu,\mu').
\end{aligned}
\end{align}
Substituting \cref{eq:holder_first,eq:holder_second} into
\cref{eq:T_continuity_pre_holder}, we obtain
\begin{align*}
W_p^p\big(\calT(\mu),\calT(\mu')\big)
&\leq
C(\mu,\mu') W_{2p}^p(\mu,\mu'),
\end{align*}
where
\begin{align}
\label{eq:C_mu_mu_prime}
\begin{aligned}
C(\mu,\mu'):&=
2^{p-1}L^p W_{2p}^p(\mu,\delta_{\mathbf 0})\\
&\quad+
2^{p-1}
\left(
\int_{\Real^N}
\big(L\|\bx\|+S_0\big)^{2p}
\,\ud\mu'(\bx)
\right)^{1/2}.  
\end{aligned}
\end{align}
Now fix $\mu\in \calP_{2p}(\Real^N)$, and let $r > 0$ be arbitrary. If $W_{2p}(\mu,\mu')\leq r$, then by the triangle inequality, we have
\begin{align*}
W_{2p}(\mu',\delta_{\mathbf 0})
&\leq
W_{2p}(\mu',\mu)+W_{2p}(\mu,\delta_{\mathbf 0})
\leq
r+W_{2p}(\mu,\delta_{\mathbf 0}).
\end{align*}
Hence the $2p$-th moments of all $\mu'$ satisfying
$W_{2p}(\mu,\mu')\leq r$ are uniformly bounded. In particular,
\begin{align*}
\left(
\int_{\Real^N}
\big(L\|\bx\|+S_0\big)^{2p}
\,\ud\mu'(\bx)
\right)^{1/2}
\end{align*}
is uniformly bounded over the $W_{2p}$-ball
\begin{align*}
\{\mu'\in\calP_{2p}(\Real^N): W_{2p}(\mu,\mu')\leq r\}.
\end{align*}
Therefore, there exists a finite constant $C_{\mu,r}>0$, depending only on
$\mu$, $r$, $L$, $S_0$, and $p$, such that whenever
$W_{2p}(\mu,\mu')\leq r$,
\begin{align*}
W_p^p\big(\calT(\mu),\calT(\mu')\big)
&\leq
C_{\mu,r} W_{2p}^p(\mu,\mu').
\end{align*}
Equivalently,
\begin{align*}
W_p\big(\calT(\mu),\calT(\mu')\big)
&\leq
C_{\mu,r}^{1/p}W_{2p}(\mu,\mu').
\end{align*}
Now, given any $\varepsilon>0$, choose
\begin{align*}
\delta
&:=
\min\left\{
r,\frac{\varepsilon}{C_{\mu,r}^{1/p}}
\right\}.
\end{align*}
Then, whenever $W_{2p}(\mu,\mu')<\delta$, we have
\begin{align*}
W_p\big(\calT(\mu),\calT(\mu')\big)&<\varepsilon.
\end{align*}
Thus, $\calT$ is continuous at $\mu$ from the $W_{2p}$ topology to the $W_p$ topology. Since $\mu\in\calP_{2p}(\Real^N)$ is arbitrary, the proof is complete.

% \bibliographystyle{IEEEtran}
% \bibliography{bib/IEEEabrv,bib/StringDefinitions,bib/allref,bib/SIGNAL}
% Generated by IEEEtran.bst, version: 1.14 (2015/08/26)

\end{document}